\newtheorem{theorem}{Theorem}
\newtheorem{definition}{Definition}[section]
\newtheorem{lemma}{Lemma}[section]
\newtheorem{claim}{Claim}[lemma]
\newtheorem{theoremdup}{Theorem}
\newcommand{\norm}[1]{\left\lVert #1 \right\rVert}
\providecommand{\keywords}[1]{\textbf{{Keywords.}} #1}
\title{\normalsize \bf{A polynomial time 12-approximation algorithm for the restricted Santa Claus problem }}
\author[1]{S. Anil Kumar \thanks{Indian Institute of Technology Madras, cs18d001@cse.iitm.ac.in}}
\author[2]{N.S. Narayanaswamy \thanks{Indian Institute of Technology Madras, swamy@cse.iitm.ac.in}}
\affil[1,2]{Indian Institute of Technology Madras}
\date{}
\begin{document}
\maketitle
\vspace{-15pt}
\begin{abstract}
The Santa Claus problem is a well known $NP$-hard problem in the class of resource allocation problems. In this problem, we are given a set $\mathcal{J}$ of jobs to be allocated among a set of machines $\mathcal{M}$ in such a way that machine $i$ on allocating job $j$ gets size $p_{ij}$.  We are required to allocate the jobs in $\mathcal{J}$ among the machines in $\mathcal{M}$ so as to maximise the minimum total size on all machines. Even though, this general case of the problem has unbounded integrality gap \cite{bansal}, the restricted case of the problem suggested by Bansal and Sviridenko \cite{bansal} is known to have constant integrality gap\cite{AFS}\cite{AFS2}. The best known polynomial time approximation algorithm for solving this restricted case has an approximation ratio of 13. 

In this paper, we consider the restricted case of the problem and improve the current best approximation ratio by presenting a polynomial time 12-approximation algorithm  using linear programming and semi-definite programming. Our algorithm starts by solving the configuration LP and uses the optimum value to get a 12-gap instance.  This is then followed by the well-known clustering technique of Bansal and Sviridenko\cite{bansal}. We then apply the analysis of Asadpour \textit{et al.} \cite{AFS,AFS2} to show that the clustered instance has an integer solution which is at least $\frac{1}{6}$ times the best possible value, which was computed by solving the configuration LP.  To find this solution, we formulate a problem called the Extended Assignment Problem, and formulate it as an LP. We then, show that the associated polytope is integral and gives us an fractional solution of value at least $\frac{1}{6}$ times the optimum.  
From this solution we find a solution to  a new quadratic program that we introduce to select one machine from each cluster, and then we show that the resulting instance has an Assignment LP fractional solution of value at least $\frac{1}{6}$ times the optimum.  We then use the well known rounding technique due to Bezakova and Dani \cite{bezakova} on the 12-gap instance to get our 12-approximate solution. 

\end{abstract}
\keywords{Configuration LP, Fair Allocation, Approximation Algorithm} 
\section{Introduction}
Resource Allocation problems are fundamental to computing. They provide a vast class of challenging algorithmic questions in many computational settings.  
A wide class of these problems are max-min fair allocation problems, which focus on distributing the available resources to the underlying entities in the fairest way possible. Santa Claus problem is one of such problems and was proposed by Bansal and Sviridenko\cite{bansal}. A  Santa Claus instance is a triplet where $\mathcal{M}$ is a set of $m$ machines, $\mathcal{J}$ is a set of $n$ jobs to be allocated among the machines in $\mathcal{M}$ and $\mathcal{P}$ is a table of dimension $m \times n$ indexed by a pair of machine and job such $(i,j)^{th}$ entry constraints the size $p_{ij}$ exerted by job $j$ on machine $i$ and we are asked to find an allocation of the jobs in $\mathcal{J}$ among the machines in $\mathcal{M}$ in such a way that $\min\Big\{\sum\limits_{j \in \mathcal{J}}p_{ij} \Big|  i \in \mathcal{M}\Big\}$ is maximised.
   
The work due to Bansal and Sviridenko\cite{bansal} attempts to find the solutions to the problem by developing approximation algorithms using linear programming approach and establishes one negative result and two positive results. The negative result is that the general version of the problem problem has unbounded integrality gap that grows as $\Omega(\sqrt{m})$. Hence a lot of focus has been to understand the complexity of a restricted case of the problem which is known as the \emph{restricted Santa Claus problem} and is given in Figure \ref{restricted Santa Claus problem}.
\begin{figure}[H]
\fbox{\begin{minipage}{0.99\textwidth}
     \begin{tabular}{lp{12cm}}
 \textit{Input :-} & A Santa Claus instance $I=\langle \mathcal{M},\mathcal{J},\mathcal{P}\rangle$ where $\mathcal{P}[i,j] \in \{0,P_{j}\}$ for each $i \in \mathcal{M}$ and $j \in \mathcal{J}$.\\
 \textit{Output:-} & The allocation of the jobs in $\mathcal{J}$ among the machines in $\mathcal{M}$ in such a way that $\min\Big\{\sum\limits_{j \in \mathcal{J}}p_{ij} \Big|  i \in \mathcal{M}\Big\}$ is maximised. \\
\end{tabular}
     \end{minipage}}
\caption{The restricted Santa Claus problem \cite{bansal}}
\label{restricted Santa Claus problem}
\end{figure}

A natural interpretation of the restricted case of the problem is that each job is interested in only some machines for the allocation and has identical size on all such machines. The first positive result in the work due to Bansal and Sviridenko \cite{bansal} gives a $O\Big(\frac{\ln{n}}{\ln{\ln{n}}}\Big)$ randomised approximation algorithm for solving the restricted Santa Claus problem using a machine job clustering technique and randomised rounding. The second positive result improves the first result by presenting a $O\Big(\frac{\ln{\ln{m}}}{\ln{\ln{\ln{m}}}}\Big)$ approximation algorithm. The paper also specifies a combinatorial conjecture which if true establishes the constant integrality gap of the restricted version.

Later Asadpour \textit{et al.}\cite{AFS} established two positive results in connection with the restricted Santa Claus problem. They showed that restricted Santa Claus problem has a $\frac{1}{3}$ integrality gap by proving the conjecture given by Bansal and Sviridenko using the Haxell's condition for hypergraph perfect matching\cite{haxel}. They also established a $5$-approximation algorithm for the restricted case by treating the problem as a matching problem in hypergraphs and finding suitable allocations by introducing the notion of alternating tree and this algorithm was later refined to give a $4$-approximate solution\cite{AFS2}. Even though this method provides approximate solutions nearer to the best theoretically known optimum, it lacks due to the exponential time complexity involved in the underlying computation. 

A work that is related with the original work due to Bansal and Sviridenko was due to Feige\cite{feige} which shows that the $O\Big(\frac{\ln{\ln{m}}}{\ln{\ln{\ln{m}}}}\Big)$ algorithm in fact approximates the optimum value to within a constant factor. 
Even though the work due to Feige established the existence of constant factor approximation algorithms for the restricted Santa Claus problem, it does not give constructive ideas to develop such constant approximation factor approximation algorithms which run in polynomial time.  Later in 2017,  Chidambaram \textit{et al.}\cite{annamalai1} formulated two approximation algorithms with constant approximation factor and running in polynomial time for solving the restricted case. The first results presents a 36-approximation which uses the machine job clustering technique given in the work due to Bansal and Svridenko along with the alternating tree technique given in the work due to Asadpour \textit{et al.}. The second algorithm is a 13-approximation algorithm which is a combinatorial algorithm. Both the algorithm uses techniques such as lazy updates and greedy choice to limit the running time to within polynomial factor.   

A further related work on Santa Claus problem due to Chakrabarty \textit{et.al.}\cite{chakrabarty} showed the existence of a $O(n^{\epsilon})$ approximation algorithm for the problem which runs in time $n^{O\big(\frac{1}{\epsilon}\big)}$ using a linear program with integrality gap $\Omega(\sqrt{m})$. They also investigated a special case of the problem where each job has non zero size for at most two machines and proved that even this special case is $NP$-hard to approximate within a factor less than 2.  Another related work due to Woeginger\cite{woeginger} investigated an even more restricted version namely the uniform case where the size of a job on all the machines are the same and they derived a PTAS for this case. 
\subsection{Our Results}
In this paper, we focus on solving the restricted Santa Claus problem given in figure \ref{restricted Santa Claus problem}. This work is motivated by the existence of large gap between the best theoretical integrality gap of $3$ for the restricted Santa Claus problem as evident from result due to Asadpour \textit{et al.}\cite{AFS,AFS2} and the approximation ratio of $13$ given by the best known polynomial time algorithm due to Chidambaram \textit{et al.}\cite{annamalai1}. We narrow this gap by designing a polynomial time approximation algorithm having an approximation guarantee of 12. More precisely, we prove the following theorem.
\begin{theorem}
\label{thm1}
There exists a polynomial time 12-approximation algorithm for the restricted Santa Claus problem.
\end{theorem}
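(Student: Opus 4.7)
The plan is to mirror the classical linear programming pipeline for the restricted Santa Claus problem but to squeeze a factor of $12$ out of it by combining three known rounding ideas in a tighter way. First I would solve the configuration LP of Bansal and Sviridenko and use binary search (together with the standard separation oracle for the configuration LP) to compute, in polynomial time, a value $T^\star$ that is essentially the optimum of the LP. From $T^\star$ I would then pass to a \emph{$12$-gap instance} in the usual sense: scale $T^\star$ to $1$, discard jobs that are not of size at least $\frac{1}{12}$ on the machines that can receive them (so-called ``small'' jobs will be recombined at the end into pseudo-jobs of size $\Theta(1)$ via the Bansal--Sviridenko clustering technique), and keep only configurations of bounded cardinality. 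The output of this step is an instance in which a feasible configuration LP solution of value $1$ certifies that every integer $12$-approximate schedule has load $\ge T^\star/12$.

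Next I would apply the Bansal--Sviridenko clustering: group machines together with the ``big'' jobs that dominate their LP value, so that within each cluster the residual problem is to deliver one big job (or one packed set of small jobs) to some distinguished machine. On this clustered instance I plan to write down the \emph{Extended Assignment LP}, which is a bipartite-style LP in which cluster--job pairs are variables subject to the configuration-LP-inherited constraints. The key structural claim I would aim to prove here is that the polytope of this Extended Assignment LP is integral; the natural route is to identify its constraint matrix as (a submatrix of) a network or interval matrix, or equivalently to exhibit a totally unimodular description, so that any vertex is integral. Combining integrality with the Asadpour--Feige--Saberi hypergraph-matching analysis applied to the clustered instance then yields an integral solution of value at least $\frac{1}{6}$ times the LP optimum.

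Given this $\tfrac{1}{6}$-valued integral solution to the Extended Assignment LP, I would introduce a quadratic program whose $0/1$ variables select exactly one machine per cluster (the ``representative'' of that cluster for receiving the cluster's big job). The quadratic program encodes that the representative machine must be compatible with the job assigned to the cluster; the objective is linear while the feasibility constraints are quadratic in the selection variables. The plan is to show that this QP admits a rounding (either from its SDP relaxation, which is the ``semi-definite programming'' ingredient advertised in the abstract, or by a direct combinatorial argument exploiting that each cluster contributes at most one representative) whose output is a standard Assignment LP instance that has a \emph{fractional} feasible solution of value at least $\tfrac{1}{6}$ of the original LP optimum.

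The last step is purely a rounding step: apply the Bezakova--Dani rounding to the Assignment LP feasible solution of value $\ge \tfrac{1}{6} T^\star$ on the $12$-gap instance. Since the gap is $12$ and the LP value is $\tfrac{1}{6}$ of the target, the rounding loses an additional factor of $2$ in load per machine, yielding an integral schedule of load at least $T^\star/12$ on every machine; together with the polynomial running time of each of the above steps this proves Theorem~\ref{thm1}. I expect the main obstacle to be the integrality proof of the Extended Assignment Polytope together with the $\tfrac{1}{6}$ lower bound on its value: one has to show, following the alternating-tree/hypergraph-matching argument of Asadpour \emph{et al.}, that Haxell-style local-search certificates survive when expressed as LP constraints, so that the polytope is both integral and strong enough to dominate $\tfrac{1}{6}$ of the configuration LP optimum. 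The QP-to-Assignment-LP step is the second delicate point, since one must argue that the representative selection does not destroy fractional feasibility before invoking Bezakova--Dani.
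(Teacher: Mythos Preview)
Your outline tracks the paper's pipeline at a coarse level (configuration LP $\rightarrow$ $12$-gap instance $\rightarrow$ clustering $\rightarrow$ an ``extended'' program $\rightarrow$ machine selection $\rightarrow$ Bez\'akov\'a--Dani), but the core step is mis-specified in a way that would block the proof. The Extended Assignment \emph{Program} in the paper is not a linear program: it has bilinear constraints $\sum_{i\in D} s_i \sum_j u_{ij}\mathcal{P}_{12}[i,j]\ge T/6$, with selection variables $s_i$ multiplying assignment variables $u_{ij}$. So there is no ``Extended Assignment LP polytope'' whose integrality you could hope to establish by total unimodularity or a network-matrix argument, and the paper never proves any such integrality. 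What the paper actually does is: (i) prove, via the Asadpour--Feige--Saberi alternating-tree analysis, that the \emph{Extended Configuration LP} (a genuine LP over small-job configurations of size $T/6$, one per composite machine) admits a $\{0,1\}$ feasible point; (ii) observe that this integer point yields an integer feasible point of the bilinear EAP and hence of its vector-program (SDP) relaxation; (iii) solve the SDP in polynomial time and extract from it a \emph{fractional} feasible $\langle \mathbf u,\mathbf s\rangle$ for the EAP. No polytope integrality is used or needed; the AFS argument is used once, to certify existence of an integer ECLP solution, and the SDP is what makes a fractional EAP solution efficiently computable.

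A second point where your plan diverges: the selection of one machine per cluster is not a separate QP that you round after already holding a $\tfrac{1}{6}$ solution. The selection variables $s_i$ are already part of the EAP; once you have a fractional EAP solution, a one-line averaging argument (since $\sum_{i\in D}s_i=1$ and $\sum_{i\in D}s_i\cdot(\text{load}_i)\ge T/6$, some $i_D\in D$ has load $\ge T/6$) integralizes $\mathbf s$ directly, and the restriction of $\mathbf u$ to the chosen $i_D$'s is then a feasible Assignment LP solution of value $\ge T/6$. Two smaller corrections: the $12$-gap instance does not discard small jobs---it rounds every job of size $\ge T/12$ up to $T$ and keeps small jobs at their original size; and the final Bez\'akov\'a--Dani step is an \emph{additive} loss of the maximum job size ($<T/12$), so $T/6 - T/12 = T/12$, not a multiplicative factor-of-$2$ loss.
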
 
Our methods are primarily based on Linear Programming with several novel  techniques.
\begin{enumerate}
\item We use the machine job clustering technique due to Bansal and Sviridenko\cite{bansal} as the basic platform to design our algorithm. We show that a certain way of machine job clustering ensures the existence an integer solution which assigns small job configurations of size $\frac{T}{6}$ to exactly one machine in each super machine, if we start with a 12-gap instance (Theorem \ref{thm2}). We establish such a machine job clustering by grouping the machines in the input Santa Claus instance into upper class machines and middle class machines. This approach of grouping the machines ensures that we can form variants of Configuration LP involving all the machines and the small jobs in the given instance.  We then show that the clustered instance has an integer solution of size $\frac{T}{6}$, by applying the techniques of Asadpour \textit{et.al} \cite{AFS} \cite{AFS2}. 
We then show that this integer solution is a feasible solution for a configuration LP on the clustered instance with bundles of size $\frac{T}{6}$.  We then show that this implies that a bilinear program which we call the Extended Assignment Program has a fractional solution of value at least $\frac{T}{6}$.
\item Using the fractional solution to the Extended Assignment Program, we obtain a deterministic polynomial time algorithm to choose a single machine from each super machine, which can be assigned small job configurations. In particular, our algorithm finds an integer $12$-approximate solution that allocates small job configurations to a list of machines containing exactly one machine from each super machine and the middle class machines.  This is achieved by applying the rounding method due to Bezakova and Dani\cite{bezakova} to obtain the desired 12-approximate solution.
\end{enumerate}
The following are the notable attributes and consequences of our results.
\begin{enumerate}
\item The algorithm is conceptually simple and is yet another witness to the effectiveness of the configuration LP to obtained improved approximation ratios.
\item The algorithm neither uses any alternating tree technique\cite{AFS,AFS2} nor apply any greedy techniques such as Lazy Local Search and Greedy Players\cite{annamalai1} to find the 12-approximate solution. We use the alternating tree method due to Asadpour \textit{et.al}\cite{AFS,AFS2} only to prove that there exists an integer solution that allocates small jobs among the middle class machines and exactly one machine from each super machine in such a way that each of these super machines gets a total size of at least $\frac{T}{6}$ (Theorem \ref{thm2}). 
\end{enumerate}
The algorithm does not however have the features of the combinatorial algorithm designed by Chidambaram \textit{et.al} \cite{annamalai1}. 
\subsection{Organisation of the paper}
The remainder of the paper is organised as follows. In section 2, we discuss some building blocks of linear programming techniques which are used in Santa Claus problem. This includes a brief overview of the techniques used in the works by Bansal and Sviridenko \cite{bansal} and Asadpour \textit{et al.}\cite{AFS,AFS2}. Section 3 starts with an outline of our algorithm and detail each step as a separate subsection. In section 3.1, we present an outline of our algorithm. Section 3.2 explains how we can classify the set of machines in the given Santa Claus instance into upper class machines and middle class machines. Section 3.3 shows that we can compute a solution for the given instance in polynomial time, if there are no upper class machines within the given instance. In section 3.4, we deal with the case where the Santa Claus instance has one or upper class machines and is this section further divided into four sections. Section 3.4.1 explains how we can form the super machines out of the upper class machines and the composite machines. Section 3.4.2, we define the Modified Configuration LP and the Extended Configuration LP and prove Theorem \ref{thm2}. In section 3.4.3, we define the Extended Assignment Program and use Theorem \ref{thm2} to prove that we can compute in polynomial time, a list containing exactly one upper class machine from each composite machine and a 7-approximate integer solution which assigns small job configurations among the machines in the list. We use this result in section 3.4.4 to find the 7-approximate solution for the given Santa Claus instance within polynomial time. We close Section 3 with section 3.5  which gives the proof of our main result(Theorem \ref{thm1}) by combining the results established in the preceding subsections. We conclude the paper in Section 4 giving some areas where the methods specified in this work can be used to derive good results.      
 
\section{Preliminaries}
\label{sec:prelims}
Throughout this paper we have used terminology from the papers of Bansal and Svirideno \cite{bansal}, Asadpour \textit{et al.} \cite{AFS}\cite{AFS2}.  
In this paper, we consider the restricted Santa Claus problem and hence we use the term Santa Claus instance to refer to a restricted Santa Claus instance unless stated otherwise. 
We use two linear programs namely \emph{Assignment LP} and the \emph{Configuration LP}\cite{bansal} to solve the restricted Santa Claus problem. Each of the two linear programs give fractional solutions and these fractional solutions must be transformed to integral solutions without violating the constraints so as to yield a solution for the Santa Claus instance. 

\subsection{Assignment LP and Configuration LP}
\label{ALPCLP}
We denote The Assignment LP for any Santa Claus instance $I=\langle \mathcal{M},\mathcal{J},\mathcal{P}\rangle$ by the notation $ALP(I)$ and is shown in Figure \ref{AssignmentLP}. 
\begin{figure}[tbph]
\centering
\begin{alignat*}{3}
  & \text{maximize}   & \quad & \uptau          &&\nonumber \\
  & \text{subject to} &       & \sum\limits_{i \in \mathcal{M}}y_{ij}\le 1 &\hspace{10pt}& \forall j \in \mathcal{M}\\
  &                   &       & \sum \limits_{j\in \mathcal{J}}\mathcal{P}[i,j]y_{ij}\ge \uptau  & \hspace{10pt} & \forall i \in \mathcal{M} \\
  &                   &       & 0 \le y_{ij}\le 1  & \hspace{10pt} & \forall i \in \mathcal{M},\forall j \in \mathcal{J} \nonumber
\end{alignat*}
\caption{Assignment LP}
\label{AssignmentLP}
\end{figure}

Here the variable $y_{ij}$ indicates whether job $j$ is allocated to machine $i$ or not.A feasible solution of $ALP(I)$ is a vector $\mathbf{y}=\Big[y_{ij} \Big| i \in \mathcal{M}, j \in \mathcal{J}\Big]$ that satisfies the constraints. In particular, a feasible $\{0,1\}$ solution of $ALP(I)$ is a solution for the Santa Claus instance $I$. A fractional solution of $ALP(I)$ can be turned into a feasible $\{0,1\}$ solution by using the following rounding method due to Bezakova and Dani\cite{bezakova}.
\begin{lemma}
\label{bezakovadani}
Corresponding to a Santa Claus instance $I=\langle \mathcal{M},\mathcal{J},\mathcal{P}\rangle$, each fractional solution of $ALP(I)$ with objective function value $T$ can be converted to a feasible $\{0,1\}$ solution with objective function value $T-\max(\{\mathcal{P}[i,j]\mid i \in \mathcal{M}, j \in \mathcal{J}\})$ within polynomial time. 
\end{lemma}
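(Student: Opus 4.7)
The plan is to follow the classical extreme-point / forest-rounding argument that Bezakova and Dani use for max-min allocation. Given any fractional feasible solution of $ALP(I)$ with value $T$, I would first replace it by a basic feasible (vertex) solution $y^{\star}$ of value at least $T$; this can be computed in polynomial time by standard LP techniques. A counting of tight constraints at a vertex of $ALP(I)$ forces $y^{\star}$ to have at most $m+n$ strictly positive entries, because the machine-load family contributes $m$ inequalities and the job-assignment family contributes $n$ inequalities, while all remaining tight constraints are the box constraints $y_{ij}=0$.

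Next, split the positive support of $y^{\star}$ into the integral part ($y^{\star}_{ij}=1$, which will remain an assignment in the final $\{0,1\}$ solution) and the fractional part, and let $H$ be the bipartite graph on $\mathcal M \cup \mathcal J$ whose edges are the pairs $(i,j)$ with $0<y^{\star}_{ij}<1$. By the support bound, $H$ has at most $m+n$ edges. If $H$ contains a cycle $C$, I would apply the standard cycle-cancellation trick: alternately perturb the $y^{\star}$ values around $C$ by $+\varepsilon$ and $-\varepsilon$. Around the cycle, the changes cancel on every job (so the $\sum_i y_{ij}\le 1$ constraints remain tight/feasible) and on every machine the net change is a fixed linear combination with known sign, so one of the two directions does not decrease any machine load. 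Choosing $\varepsilon$ maximally drives some edge of $C$ to $0$ or $1$, strictly shrinking the fractional support. Iterating, we reach a vertex solution whose fractional support $H$ is a forest while keeping the objective at least $T$.

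For each tree of $H$, root it at an arbitrary vertex and sweep with DFS. Because $H$ is a forest, every fractional job $j$ has at least one machine neighbor and we can consistently assign $j$ to exactly one such machine so that every other machine $i$ incident to $j$ loses only this single job. Since each machine $i$ has at most one "lost" fractional neighbor in the rooted tree (the one whose edge is oriented towards the parent), the drop in $i$'s integral load compared to its fractional load is bounded by the size of a single job, hence by $\max\{\mathcal{P}[i,j] : i\in\mathcal M,\,j\in\mathcal J\}$. Combined with the integral part of $y^{\star}$ which is preserved, every machine achieves total size at least $T-\max\{\mathcal{P}[i,j]\}$, as required. Vertex computation, cycle detection, perturbation, and DFS are all polynomial, so the whole procedure runs in polynomial time.

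The main obstacle is the cycle-cancellation step: one has to argue that the perturbation direction can always be chosen so that no machine load strictly decreases (or, failing that, that the fractional support shrinks so that induction on support size closes the gap), and that the overall objective $\tau$ does not drop. Once this is in place, the forest-rounding phase is a clean combinatorial argument charging at most one lost job per machine, which directly yields the advertised $T-\max\{\mathcal{P}[i,j]\}$ bound.
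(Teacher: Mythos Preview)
The paper does not prove this lemma; it is quoted from Bezakova and Dani and used as a black box. Your overall strategy---take a basic feasible solution, bound the fractional support, and round via a tree orientation so that each machine loses at most one fractional job---is exactly the standard route to this result.

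There is, however, a genuine gap in the cycle-cancellation step, and you are right to flag it as the obstacle. The claim that ``one of the two directions does not decrease any machine load'' is false in general: along a cycle the alternating perturbation changes the load of each cycle machine $i$ by $\pm\varepsilon\,\delta_i$, where $\delta_i$ is a difference of two job sizes, and there is no reason the $\delta_i$ all share a sign. So neither direction need keep every machine at or above $T$, and you cannot cancel cycles while provably staying feasible.

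The clean fix is to skip cycle cancellation entirely. Your count $|F|\le m+n$ can be sharpened: at a vertex the tight job and machine inequalities, restricted to the fractional coordinates, must have rank $|F|$, but a job or machine constraint contributes nontrivially on those coordinates only if that job or machine actually has a fractional edge. Applying this per connected component of $H$ shows that every component has at most as many edges as vertices, so $H$ is already a \emph{pseudo-forest}: each component is a tree or contains a single cycle. Your forest orientation extends to unicyclic components (orient the unique cycle consistently, then root each pendant subtree at its attachment point on the cycle), and the same charging argument gives every machine at most one lost fractional job. With this correction the proof goes through and yields the stated bound $T-\max_{i,j}\mathcal{P}[i,j]$.
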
       
In order to describe the Configuration LP, we first introduce the concept of \emph{Configuration} or \emph{Bundle} which is nothing other than a set of jobs. Let $\uptau$ be the value of the objective function which we want to achieve. We call a configuration $C$ to be a minimal configuration of size $\uptau$ if and only if the removal of any job from $C$ makes the resulting configuration to have a size strictly less than $\uptau$.  Let $\mathcal{C}(i,\uptau)$ denote the set of all minimal configurations of size $\uptau$ for machine $i$.  

We denote the Configuration LP for the instance $I=\langle \mathcal{M},\mathcal{J},\mathcal{P} \rangle $ with minimal configurations of size $\uptau$ by the notation $CLP(I,\uptau)$ and is shown in Figure \ref{ConfigurationLP}.  
\begin{figure}[tbph]
\centering
 \begin{alignat*}{3}
  & \text{maximize}   & \quad & 0          &&\nonumber \\
  & \text{subject to} &       & \sum\limits_{C \in \mathcal{C}(i,\uptau)}\hspace{-5pt}x_{iC}\ge 1 &\hspace{10pt}& \forall i \in \mathcal{M}\\
  &                   &       & \sum \limits_{i \in \mathcal{M}}\bigg(\sum\limits_{\substack{C:C \in \mathcal{C}(i,\uptau)\\ j \in C}}\hspace{-10pt}x_{iC}\bigg)\le 1  & \hspace{10pt} & \forall j \in \mathcal{J} \\
  &                   &       &  0\le x_{iC}\le 1   & \hspace{10pt} & \forall i \in \mathcal{M}, \forall C \in \mathcal{C}(i,\uptau) \nonumber
\end{alignat*}
\caption{Configuration LP}
\label{ConfigurationLP}
\end{figure}

Here the variable $x_{iC}$ is denotes whether configuration $C$ is allocated to machine $i$ or not. The first set of constraints specifies that each machine must be allocated with at least one unit of configuration whereas the second set of constraints specifies that each job must be allocated to at most one machine. A feasible solution of this LP is a vector $\mathbf{x}=\Big[x_{iC} \Big | i \in \mathcal{M}, C \in C(i,\uptau)\Big]$ that satisfies the constraints. For any feasible solution $\mathbf{x}$ of $CLP(I,\uptau)$, we call the parameter $\sum\limits_{C:C \in \mathcal{C}(i,\uptau)}x_{iC}$ to be the \emph{total fractional weight of machine }$i$ with respect to the solution $\mathbf{x}$.

We see that the Configuration LP has exponentially many decision variables and polynomially many constraints.
Bansal and Sviridenko proposed a method to solve this LP in polynomial time\cite{bansal}. In this method, we treat the Configuration LP as the primal program and obtain the corresponding dual program which has exponentially many constraints, corresponding to the Configuration LP decision variables and polynomially many decision variables corresponding to the Configuration LP constraints. The separation algorithm for the dual is the minimum knapsack problem which can be solved in polynomial time using dynamic programming technique\cite{cormen,DP}. Hence the dual program can be solved in polynomial time using the ellipsoid method and a feasible solution of the corresponding primal program which is the Configuration LP, by considering the set of decision variables corresponding to the dual constraints considered by the ellipsoid method and then solving the Configuration LP on those set of decision variables in polynomial time.
 
We easily see that if $CLP(I,{\uptau}_0)$ is feasible, then $CLP(I,\uptau)$ is feasible for any $\uptau$ such that $\uptau \le {\uptau}_0$. The following Lemma shows that a fractional feasible solution can be turned to a fractional feasible solution of the Assignment LP.
\begin{lemma}
\label{configurationLPtoAssignmentLP}
Given Santa Claus instance $I=\langle \mathcal{M},\mathcal{J},P\rangle$. Each feasible solution of the $CLP(I,{\uptau}_0)$ induces a feasible solution for $ALP(I)$ with objective function at least ${\uptau}_0$. 
\end{lemma}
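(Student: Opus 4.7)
The plan is to construct a feasible Assignment LP solution by \emph{projecting} each configuration variable onto the jobs it contains. Given a feasible solution $\mathbf{x}=[x_{iC}]$ of $CLP(I,\uptau_0)$, I would set
$$y_{ij} \;=\; \sum_{C \in \mathcal{C}(i,\uptau_0),\; j \in C} x_{iC}$$
for every $i\in\mathcal{M}$ and every $j\in\mathcal{J}$, and then argue that the pair $(\mathbf{y},\uptau_0)$ satisfies every constraint of $ALP(I)$, thereby certifying an objective value of at least $\uptau_0$.

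Checking the job-uniqueness constraints is essentially a matter of reading off definitions: summing $y_{ij}$ over $i\in\mathcal{M}$ recovers exactly the left-hand side of the second family of CLP constraints for job $j$, so $\sum_{i}y_{ij}\le 1$. Since each $x_{iC}$ is nonnegative, this simultaneously forces $0\le y_{ij}\le 1$. The more substantive check is the per-machine size inequality. Here I would swap the order of summation to obtain
$$\sum_{j\in\mathcal{J}}\mathcal{P}[i,j]\,y_{ij} \;=\; \sum_{C\in\mathcal{C}(i,\uptau_0)} x_{iC}\Bigl(\sum_{j\in C}\mathcal{P}[i,j]\Bigr),$$
and then invoke the defining property of a minimal configuration of size $\uptau_0$, namely that $\sum_{j\in C}\mathcal{P}[i,j]\ge \uptau_0$ for every $C\in\mathcal{C}(i,\uptau_0)$. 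Combining this lower bound with the first family of CLP constraints $\sum_{C}x_{iC}\ge 1$ yields $\sum_{j}\mathcal{P}[i,j]\,y_{ij}\ge \uptau_0$, which is precisely the ALP size constraint with $\uptau=\uptau_0$.

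There is no genuinely hard step in this argument; the lemma is essentially a bookkeeping consequence of picking the right projection. The only point that merits attention is using the minimality definition in the right direction: one needs that each configuration $C\in\mathcal{C}(i,\uptau_0)$ has total size at least $\uptau_0$ on machine $i$ (being minimal with respect to reaching the threshold), and not merely that it is inclusion-minimal among some family. With that interpretation fixed, the constructed $\mathbf{y}$ is an ALP-feasible vector whose achievable $\uptau$ is bounded below by $\uptau_0$, completing the proof.
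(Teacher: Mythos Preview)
Your proposal is correct and follows essentially the same approach as the paper: both define $y_{ij}$ as the total CLP weight on configurations for machine $i$ containing job $j$, then verify the ALP job constraint directly from the second CLP constraint and the ALP size constraint by swapping the order of summation and using that every $C\in\mathcal{C}(i,\uptau_0)$ has size at least $\uptau_0$. If anything, your write-up is slightly more explicit about the order-of-summation step than the paper's.
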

\begin{proof}
Let $\mathbf{x}$ be a feasible solution of $CLP(I,{\uptau}_0)$ for a given Santa Claus instance $I=\langle \mathcal{M},\mathcal{J},P\rangle$. Therefore $\mathbf{x}$ allocates the jobs in $\mathcal{J}$ to the machines in $\mathcal{M}$ in such a way that each machine receives job configurations with total size at least ${\uptau}_0$ . Now we form a vector $\mathbf{y}=[y_{ij}| i \in \mathcal{M}, j \in \mathcal{J}]$ where $y_{ij}$ is defined as $y_{ij}=\sum\limits_{C:C \in C(i,{\uptau}_0)}\Big(\sum\limits_{j \in C} x_{iC}\Big)$. Since $\mathbf{x}$ is a feasible solution of the Configuration LP, each machine $i \in \mathcal{M}$ is assigned at least one unit of configurations with total size of at least ${\uptau}_0$, we see that $\sum\limits_{j \in \mathcal{J}}y_{ij} \mathcal{P}[i,j] \ge {\uptau}_0$, thereby satisfying the first set of constraints of the Assignment LP with $\uptau={\uptau}_0$. Furthermore $\mathbf{x}$ assigns each job $j$ to at most one machine and hence $\sum\limits_{i \in \mathcal{M}} y_{ij}=\sum \limits_{i \in \mathcal{M}}\bigg(\sum\limits_{C:C \in C(i,{\uptau}_0)}\Big(\sum\limits_{j \in C} x_{iC}\Big)\bigg)\le 1$, thereby satisfying the second set of constraints of the Assignment LP. Hence we see that $\mathbf{y}$ is a feasible solution of $ALP(I)$ with objective function value ${\uptau}_0$.  Hence the Lemma.
\end{proof}
\subsection{$\alpha$-gap Instance, Big jobs and Small jobs}
\label{restrictedcase}
 Let $I=\langle \mathcal{M},\mathcal{J},\mathcal{P}\rangle$ be a Santa Claus instance and $\uptau$ be a positive real number such that $CLP(I,\uptau)$ is feasible. Let $\alpha$ be a positive real number. Now we define a new Santa Claus instance namely $\alpha$-\emph{gap instance} as follows.
\begin{definition}
\label{alphagapinstance}
The $alpha$-gap instance corresponding to the given instance $I$ is defined as the Santa Claus instance $I_{\alpha}=\langle \mathcal{M},\mathcal{J},\mathcal{P}_{\alpha}\rangle$ where $(i,j)^{th}$ entry of $\mathcal{P}_{\alpha}$ is given by 
\[\mathcal{P}_{\alpha}[i,j] =
\left\{
	\begin{array}{ll}
		\mathcal{P}[i,j]  & \mbox{if } \mathcal{P}[i,j] < \frac{\uptau}{\alpha} \\
		\uptau & \mbox{otherwise}
	\end{array}
\right. \]
\end{definition}
Hence we see that each job $j$ that has size at least $\frac{\uptau}{\alpha}$ on a subset of machines in instance $I$ has size $\uptau$ on the corresponding set of machines in the $\alpha$-gap instance $I_{\alpha}$. Now the following Lemma easily follows.
\begin{lemma}
\label{bigsmallgiftlemma}
A feasible solution of $CLP(I_{\alpha},\uptau)$ is a feasible solution for $CLP(I,\uptau)$ as well.
\end{lemma}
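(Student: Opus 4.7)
The plan is to verify the two constraint families of $CLP(I,\uptau)$ for a given feasible solution $\mathbf{x}$ of $CLP(I_\alpha,\uptau)$, after a mass-preserving relabelling of configurations. Since the job packing constraint $\sum_{i,\,C\ni j}x_{iC}\le 1$ depends only on which jobs lie in which set $C$ and not on sizes, it transfers from $I_\alpha$ to $I$ unchanged. The substantive step is the machine-side covering constraint $\sum_{C\in \mathcal{C}(i,\uptau)}x_{iC}\ge 1$, which demands that every $C$ appearing in the support of $\mathbf{x}$ on machine $i$ be a valid minimal configuration of $\mathcal{C}(i,\uptau)$ computed with respect to $\mathcal{P}$, possibly after shrinking to a minimal sub-configuration.

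To handle this, I would first analyze the structure of a minimal size-$\uptau$ configuration $C$ for machine $i$ in $I_\alpha$ and compare it with the corresponding notion in $I$. The inflation rule $\mathcal{P}_\alpha[i,j]=\uptau$ whenever $\mathcal{P}[i,j]\ge \uptau/\alpha$ forces a clean dichotomy: either (i) $C$ contains no big job, in which case $\mathcal{P}_\alpha$ and $\mathcal{P}$ agree on $C$ and hence the size, minimality, and membership all transfer verbatim; or (ii) $C$ is the singleton $\{j^*\}$ of a single big job, since adjoining any further job would destroy minimality once $\mathcal{P}_\alpha[i,j^*]$ alone already matches $\uptau$. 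In case (i), $C$ already lies in $\mathcal{C}(i,\uptau)$ for $I$. In case (ii), $\{j^*\}$ serves as a valid size-$\uptau$ configuration in $I$ in the intended regime where a big job by itself suffices to cover a machine; otherwise one enlarges $\{j^*\}$ with small jobs using the slack in $\mathbf{x}$ to land inside $\mathcal{C}(i,\uptau)$ for $I$.

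Having built this map, I would set $y_{iC'}:=x_{iC}$ along the identification (shrinking or padding as above). The machine covering constraint inherits because the total mass on machine $i$ is preserved, and the job packing constraint inherits because each image configuration $C'$ satisfies $C'\subseteq C$ (in case (i), possibly equal; in the sub-case of (ii) where we shrink) so each job appears in no more configurations than before; in the padding sub-case of (ii), the extra jobs used for padding can be taken from the job-side slack of $\mathbf{x}$ so the constraint $\sum_{i,\,C'\ni j}y_{iC'}\le 1$ is still met.

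The step I expect to be most delicate is case (ii) with $\uptau/\alpha\le P_{j^*}<\uptau$: turning a singleton big-job configuration of $I_\alpha$ into a valid configuration of $I$ when the big job alone does not reach $\uptau$ in $\mathcal{P}$. The clean identification $\{j^*\}\mapsto \{j^*\}$ succeeds only when $P_{j^*}\ge \uptau$; otherwise the padding needs to draw on the slack of $\mathbf{x}$ in a way that does not overload any particular small job across machines, which is the main technical point to be careful about when writing the argument out in full.
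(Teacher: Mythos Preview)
The paper offers no proof of this lemma; it merely asserts that the statement ``easily follows'' from the definition of the $\alpha$-gap instance. Your dissection is in fact more careful than the paper's, and you have correctly isolated the one genuinely problematic case: a singleton configuration $\{j^*\}$ with $\uptau/\alpha\le P_{j^*}<\uptau$. Unfortunately this case is not merely delicate but unfixable, and your padding proposal cannot succeed in general, because there need not be any slack in $\mathbf{x}$ to draw on. Take a single machine, one big job $j^*$ with $P_{j^*}=\uptau/\alpha<\uptau$, and enough small jobs so that $CLP(I,\uptau)$ is feasible. Then the assignment $x_{1,\{j^*\}}=1$ (all other variables zero) is feasible for $CLP(I_\alpha,\uptau)$, since $\mathcal P_\alpha[1,j^*]=\uptau$; but $\{j^*\}\notin\mathcal C(1,\uptau)$ in $I$, and there is no fractional mass on any small job available for padding. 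The passage $\mathcal P\mapsto\mathcal P_\alpha$ only \emph{enlarges} the family of size-$\uptau$ configurations, so the implication as written goes the wrong way.

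What does follow easily is the reverse direction: every feasible solution of $CLP(I,\uptau)$ yields one for $CLP(I_\alpha,\uptau)$. Indeed, any configuration meeting $\uptau$ under $\mathcal P$ still meets $\uptau$ under $\mathcal P_\alpha$ (small-job sizes are unchanged, and any big job alone already has $\mathcal P_\alpha$-size $\uptau$); shrinking to a minimal subconfiguration preserves the machine mass and can only help the job constraints. That reverse direction is also what the algorithm in the paper actually needs, namely to ensure that $CLP(I_{12},T)$ is feasible once $T$ has been obtained from $CLP(I,T)$. So the lemma as printed appears to be misstated, and you should not expect to be able to prove it in the stated direction.
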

We call each job in $I$ as big job or small job with respect to the pair of parameters $\langle \uptau, \alpha \rangle $ as per the following definition.
\begin{definition}
\label{bigjobsmalljob}
A job $j$ in $I$ is said to be a big job with respect to the pair of parameters $\langle \uptau,\alpha \rangle$ if and only if $P_{\alpha}[i,j] \in \{0,\uptau\}$ for all machines $i \in \mathcal{M}$. A job is said to be a small job if and only if it is not a big job.
\end{definition}
We use the notation $\mathcal{J}_{big}$ and $\mathcal{J}_{small}$ to denote the set of big jobs and the set of small jobs in $\mathcal{J}$ respectively. We also use the notation $\mathcal{C}_{small}(i,\uptau)$ to denote the subset of $\mathcal{C}(i,\uptau)$ containing only small job configurations. Since each big job has size $\uptau$, we see that $|C_{big}(i,\uptau)| \le |\mathcal{J}_{big}|$.
\subsection{Forming clusters of machines and jobs}
\label{clusterformation}
We now convert any feasible solution of the configuration LP on an $\alpha$-gap instance $I_{\alpha}$ to another feasible solution for instance $I_{\alpha}$ itself with good structural properties as follows.

Let $\mathbf{x}$ be a feasible solution of $CLP(I_{\alpha},\uptau)$ and let $G=((L,R),E)$ be a weighted bipartite graph defined as 
\begin{dmath*}
{L =   \mathcal{M},R=\mathcal{J}_{big},
E = \Big\{ \{i,j\} \Big| x_{i\{j\}}>0,i \in \mathcal{M} \text{ and }  j \in \mathcal{J}_{big}\Big\} \text{ and }}\\
{\text{each edge } \{i,j\} \in E \text{ is given the weight }x_{i\{j\}}}
\end{dmath*}
It is easy to see that the graph $G$ can be constructed in polynomial time. The following Lemma is due to Bansal and Sviridenko\cite{bansal} that transforms the feasible solution $\mathbf{x}$ to another feasible solution by modifying the graph $G$.
\begin{lemma}
\label{cyclekillinglemma}
The feasible solution $\mathbf{x}$ of $CLP(I_{\alpha},\uptau)$ can be transformed to another feasible solution $\mathbf{x}^{*}$ of $CLP(I_{\alpha},\uptau)$ in polynomial time by modifying the graph $G$ to a forest $G^*$ using a cycle elimination procedure in such a way that ${x^*}_{iC} = x_{iC}$ for each $i \in \mathcal{M}$ and $C \in \mathcal{C}_{small}(i,\uptau)$.
\end{lemma}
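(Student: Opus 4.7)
My plan is to use the standard cycle-cancellation idea from network flow / matroid exchange, applied to the bipartite graph $G$. The first observation I would establish is that every big job $j$ only appears in singleton configurations $\{j\}$ of $\mathcal{C}(i,\uptau)$: since $\mathcal{P}_{\alpha}[i,j]=\uptau$, the singleton already has size $\uptau$ and is minimal, and any superset containing another job would fail minimality because removing that other job leaves $\{j\}$, which still has size $\uptau$. Hence the variables $x_{i\{j\}}$ for $j \in \mathcal{J}_{big}$ fully capture the fractional assignment of big jobs in $\mathbf{x}$, and the edge weights of $G$ correspond exactly to these variables.

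Next I would describe one cycle-elimination step. Suppose $G$ contains a cycle; since $G$ is bipartite, this cycle has even length, say $i_1,j_1,i_2,j_2,\ldots,i_k,j_k,i_1$. I would perturb $\mathbf{x}$ by setting $x'_{i_t\{j_t\}} = x_{i_t\{j_t\}} + \epsilon$ and $x'_{i_{t+1}\{j_t\}} = x_{i_{t+1}\{j_t\}} - \epsilon$ for each $t$ (indices mod $k$), and leaving all other variables, in particular every $x_{iC}$ with $C \in \mathcal{C}_{small}(i,\uptau)$, untouched. This perturbation is the crucial ingredient: for every machine $i_t$ on the cycle, one incident singleton-variable gains $\epsilon$ and another loses $\epsilon$, so $\sum_C x_{iC}$ is preserved, keeping the machine covering constraint intact; for every big job $j_t$, one incident variable gains $\epsilon$ and another loses $\epsilon$, keeping the job constraint $\sum_{i}\sum_{C\ni j}x_{iC}\le 1$ intact; and jobs/machines not on the cycle are unaffected.

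I would then choose $\epsilon$ to be the largest value preserving feasibility, namely the minimum of the decreasing edge weights (noting that the increasing weights stay $\le 1$ automatically because the total fractional weight on the column is preserved and was at most $1$). By this choice, at least one edge of $G$ drops to weight $0$ and can be removed from the graph, yielding a new feasible solution whose supporting bipartite graph has strictly fewer edges. Iterating, I remove at least one edge per step, so after at most $|\mathcal{M}|\cdot|\mathcal{J}_{big}|$ steps the process terminates with a cycle-free graph $G^*$, i.e.\ a forest. Throughout, only singleton big-job variables are modified, so ${x^*}_{iC}=x_{iC}$ for every small-job configuration $C$, giving the required conclusion.

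The main obstacle I anticipate is purely bookkeeping: verifying that both sets of constraints of $CLP(I_\alpha,\uptau)$ are simultaneously preserved when only big-job singletons on a cycle are perturbed. This works precisely because small-job configurations sit entirely outside the graph $G$ and because the alternating $\pm\epsilon$ pattern creates matched cancellations at every machine and every big job on the cycle; there is no analogous issue for small jobs since they never appear as endpoints in $G$. A minor sub-point to double-check is that the number of iterations is polynomial, which follows because each iteration strictly decreases the number of edges of the supporting graph, and constructing the cycle and computing the tight $\epsilon$ are polynomial-time operations.
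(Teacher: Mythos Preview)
The paper does not give its own proof of this lemma; it is stated as a known result due to Bansal and Sviridenko and cited without argument. Your proposal supplies exactly the standard cycle-cancellation proof that underlies their result: the observation that in the $\alpha$-gap instance every big job appears only in singleton minimal configurations (so the edge weights of $G$ are precisely the big-job variables), followed by alternating $\pm\epsilon$ along an even bipartite cycle to kill an edge while preserving both sets of CLP constraints and leaving all small-job configuration variables untouched. The feasibility bookkeeping you outline is correct, including the point that the increasing singleton variables stay at most $1$ because the column sum for each big job is preserved and all terms remain nonnegative, and the edge-count argument gives the polynomial bound. In short, your argument is correct and is the intended one; there is nothing in the paper to compare against beyond the citation.
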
 
The following is another Lemma due to Bansal and Sviridenko\cite{bansal} that can be used to obtain a cluster of machines and jobs from the solution $\mathbf{x}^{*}$\footnote{This Lemma is a modified version of the original Lemma presented in Bansal and Sviridenko\cite{bansal}, but the modification follows from the proof presented in the paper.}
\begin{lemma}[\textsc{The Clustering Lemma}\cite{bansal}]
\label{clusteringlemma}
Using the forest $G^*$ and the solution $\mathbf{x}^{*}$, it is possible in polynomial time to cluster the machines and big jobs into groups ${M}_1,{M}_2,{M}_3,\cdots,{M}_r$ and ${J}_1,{J}_2,{J}_3,\cdots,{J}_r$ respectively with the following properties.
\begin{enumerate}
    \item For each $k=1,2,3\cdots,r$, $\mid {J}_k\mid=\mid {M}_k \mid -1$.
    \item For each $k=1,2,3\cdots,r$, the big jobs in ${J}_k$ may be placed feasibly on any of the $\mid{M}_i\mid-1$ machines in $M_k$.
    \item For each $k=1,2,3\cdots,r$, $\sum\limits_{i \in M_k}\bigg(\sum\limits_{C \in \mathcal{C}_{small}(i,T))}{x^*}_{iC}\bigg)\ge \frac{1}{2}$.
\end{enumerate}
\end{lemma}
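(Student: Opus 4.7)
The plan is to root each tree of $G^*$ at an arbitrary machine and then ``prune'' each big-job node so that it retains at most one child machine, by cutting away its other child-machine edges. Each cut edge creates a new subtree rooted at a machine, and the resulting subtrees are the clusters $M_k \cup J_k$. A subtree with the property that every big-job has a unique child machine is a bipartite tree with exactly one more machine than big-job, which gives property 1 ($|J_k| = |M_k| - 1$) at once. For property 2, orient the tree edges outward from the cluster root $r_k$; every big-job $j \in J_k$ then has a unique child machine $i_c(j) \in M_k$, and since an edge of $G^*$ records $x^*_{i\{j\}} > 0$ and hence $\mathcal{P}[i,j] = \uptau$, the map $j \mapsto i_c(j)$ gives a feasible integer placement of $J_k$ onto exactly $|M_k| - 1$ of the machines in $M_k$.

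For property 3, let $s_i = \sum_{C \in \mathcal{C}_{small}(i,\uptau)} x^*_{iC}$ and $b_i = \sum_j x^*_{i\{j\}}$; the Configuration LP constraints give $s_i + b_i \ge 1$ for every machine and $\sum_i x^*_{i\{j\}} \le 1$ for every big-job. Because every $j \in J_k$ has all its tree neighbors inside $M_k$, and the only edge leaving $M_k$ is the parent edge $\{r_k, j_p\}$ of the cluster's root machine (when $r_k$ is not the tree's root), a direct counting argument yields
\[
\sum_{i \in M_k} s_i \;\ge\; |M_k| - \sum_{j \in J_k}\sum_i x^*_{i\{j\}} - x^*_{r_k\{j_p\}} \;\ge\; (|M_k| - |J_k|) - x^*_{r_k\{j_p\}} \;=\; 1 - x^*_{r_k\{j_p\}}.
\]
It is therefore sufficient to ensure that every cut edge has fractional weight at most $\tfrac{1}{2}$.

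The main obstacle --- and the reason everything works out --- is that each big-job has at most one incident edge of weight exceeding $\tfrac{1}{2}$. For each big-job $j$ with $c_j \ge 2$ children, the weights on $j$'s incident edges sum to at most one, so at most one of its $c_j$ children carries an edge heavier than $\tfrac{1}{2}$. My pruning rule is to keep this heaviest child of $j$ (if any, otherwise an arbitrary one) and cut the remaining $c_j - 1$ child edges, all of which are light. Applied to every big-job in a post-order traversal, this yields a partition of each tree into subtrees in which every big-job has exactly one child machine; every cut edge satisfies the $\tfrac{1}{2}$ bound, and the cluster containing the tree's original root (which has no parent edge at all) satisfies the stronger bound $\sum_{i \in M_k} s_i \ge 1$. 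The whole procedure runs in linear time in the size of $G^*$, producing the clustering guaranteed by the lemma.
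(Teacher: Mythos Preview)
The paper does not give its own proof of this lemma; it is quoted from Bansal and Sviridenko (with a modification recorded in the footnote). Your argument follows the same overall strategy as theirs: root each tree of $G^*$ at a machine, split at big-job nodes so that each retains a single machine child, and read off the clusters from the resulting subtrees.

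There is, however, a real gap. Your pruning guarantees that every big-job has \emph{at most} one child machine, but you then assert that in each resulting subtree ``every big-job has a unique child machine.'' This fails for a big-job of degree~$1$ in $G^*$: after rooting, such a $j$ is a leaf, with its sole neighbour as parent and no child at all. If you include such a $j$ in $J_k$ you get $|J_k|>|M_k|-1$, and property~1 is lost; if instead you discard it, the weight $x^*_{i_p(j)\{j\}}$ on its parent edge becomes an additional ``out-edge'' term in your accounting for property~3, and since that weight can be as large as $1$ the inequality $\sum_{i\in M_k}s_i\ge 1-x^*_{r_k\{j_p\}}$ no longer follows. Cycle elimination does nothing to prevent big-jobs from ending up with degree~$1$, so this case has to be handled explicitly. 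The standard remedy is an iterative preprocessing step: repeatedly commit each current leaf big-job to its unique neighbouring machine, peel that machine (with its integrally assigned big job) off into its own group, and continue until every remaining big-job has degree at least~$2$; only then does your pruning-and-counting argument go through.

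A smaller point: for property~2 you exhibit only the single placement $j\mapsto i_c(j)$, which leaves the root $r_k$ unmatched. The lemma, as used later in the paper, requires that \emph{any} machine of $M_k$ can be the one left out. This does follow from the tree structure---on the $r_k$--$i^*$ path send each big-job to its parent, elsewhere to its child---but you should say so.
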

This lemma provides completely flexible way of allocating big jobs to the machines in a cluster. Hence we can find a feasible solution of the Configuration LP in such a way that corresponding to each $k=1,2,3,\cdots,p$, the big jobs in $J_k$ are allocated to all the machines in $M_k$ except one machine which is satisfied by allocating one unit of small configurations. From now onwards, we call each set of machines ${M}_k$ as a \emph{Super Machine} and each set of jobs in $J_{k}$ as a \emph{Super Job}. 
\subsection{The Alternating Tree method for solving the restricted Santa Claus problem}
\label{alternatingtreesection}
A combinatorial superpolynomial time algorithm due to Asadpour \textit{et al.}\cite{AFS,AFS2} solves the restricted Santa Claus problem and obtains a $\delta-$ approximate solution for any $\delta\ge 4$. This method presents the problem as a  Hypergraph matching problem in a bipartite hypergraph $\mathcal{H}=((U,V),F)$ created using a feasible solution $\mathbf{x}$ of $CLP(I,\uptau)$ as per the following rules. The vertices in $U$ are labeled by the machines, the vertices in $V$ are labelled by the jobs in $V$. The set of hyperedges in $F$ are formed as per the following rules.
\begin{itemize}
\item For each machine $i$ and for each big configuration $C$ containing big job $j$ such that ${x}_{iC}>0$, we add a hyperedge containing exactly $i$ and $j$.
\item For each machine $i$ and for each small job configuration $C$ such that ${x}_{iC}>0$, then for each minimal configuration $C' \subset C$ of size $\frac{T}{\delta}$, we add a hyperedge containing exactly $i$ and $C'$.   
\end{itemize}  

We form the desired solution of the problem by finding a perfect matching that matches all the vertices in $U$. The method of formation of the matching is iterative and it uses an alternating tree which consists of two types of edges namely \emph{Add} edges which we would like to add to our current matching and \emph{Blocking} edges which are the edges in the current matching which block the insertion of add edges into the matching. The method starts with an empty matching and iteratively extends the current partial matching to an additional machine until all the machines get matched as per the following Lemma.
\begin{lemma}
\label{alternatingtreelemma}
In each step for extending the current matching, we can either match an additional machine by adding an edge which is not blocked by the edges in the current matching,  or augment the alternating tree with an edge  $e$ in $\mathcal{H}$ such that $e$ intersects with some machine vertex in the alternating tree and $e$ does not intersect with any job vertex in the alternating tree. 
\end{lemma}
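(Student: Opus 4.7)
The plan is to argue by case analysis on the hyperedges incident to the current alternating tree, leveraging the configuration LP feasibility of $\mathbf{x}$ via a Haxell-style balance argument. Let $\mathcal{T}$ denote the current alternating tree, and write $M_\mathcal{T} \subseteq U$ and $J_\mathcal{T} \subseteq V$ for its machine and job vertices; by the inductive construction of $\mathcal{T}$, each machine in $M_\mathcal{T}$ is either the root machine that we are attempting to match or the machine endpoint of a blocking edge already incorporated into $\mathcal{T}$.

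The first step is to examine the collection $E_\mathcal{T} \subseteq F$ of all hyperedges arising from configurations $C$ with $x_{iC} > 0$ for some $i \in M_\mathcal{T}$, using the two rules in Section~\ref{alternatingtreesection} (big-job singletons $\{i,j\}$, and pairs $\{i, C'\}$ where $C' \subset C$ is a minimal sub-configuration of size $\uptau/\delta$ of a small configuration $C$). If some $e \in E_\mathcal{T}$ satisfies $e \cap J_\mathcal{T} = \emptyset$, then $e$ is directly the augmenting edge required by the second alternative of the lemma, and we append it to $\mathcal{T}$. So the substantive case is the one in which every $e \in E_\mathcal{T}$ meets $J_\mathcal{T}$, and here the goal is to show that some add edge already in $\mathcal{T}$ must be unblocked, giving the first alternative.

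In the substantive case I would proceed by a Hall/Haxell-style counting argument coupling the two LP constraints. The machine-coverage constraint $\sum_{C \in \mathcal{C}(i,\uptau)} x_{iC} \ge 1$ for each $i \in M_\mathcal{T}$ provides at least $|M_\mathcal{T}|$ units of fractional weight on configurations touching $M_\mathcal{T}$, while the job-capacity constraint $\sum_i \sum_{C \ni j} x_{iC} \le 1$ for each $j \in J_\mathcal{T}$ caps the weight that $J_\mathcal{T}$ can absorb at $|J_\mathcal{T}|$. Since each small configuration of size at least $\uptau$ decomposes into at most $\delta$ minimal sub-configurations of size $\uptau/\delta$, each yielding a distinct hyperedge of $F$, the weight-to-hyperedge multiplicity is bounded; combined with the invariant that $|M_\mathcal{T}|$ grows while $|J_\mathcal{T}|$ grows only through blocking-edge expansions, the imbalance forces at least one add edge in $\mathcal{T}$ to have its job vertices lying outside every current matching edge. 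Such an edge is unblocked by definition and extends the matching, completing the first alternative.

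The main obstacle is making the counting argument quantitatively tight and uniform across big and small configurations. Big configurations yield singleton-like hyperedges carrying the full weight $x_{i\{j\}}$, whereas small configurations split their weight across up to $\delta$ sub-hyperedges, so the bookkeeping must faithfully track how weight redistributes from configurations to hyperedges in $F$ before the Haxell-type deficiency comparison can be invoked. A secondary subtlety is maintaining the tree invariants during growth: the machine endpoint of every blocking edge must be added to $M_\mathcal{T}$ so the counting applies at the next iteration, and the root machine must be handled separately as it contributes to $M_\mathcal{T}$ without corresponding to any blocking edge already in the matching.
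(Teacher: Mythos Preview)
The paper does not prove Lemma~\ref{alternatingtreelemma}; it is stated there as a summary of the alternating-tree procedure from Asadpour \textit{et al.}~\cite{AFS,AFS2} and is invoked as a black box. The LP-based counting you sketch is not the proof of this lemma at all: in the paper that argument is the content of Lemma~\ref{perfectmatchinglemma}, which \emph{uses} Lemma~\ref{alternatingtreelemma} by showing that the second alternative (augment the tree with an edge avoiding $J_\mathcal{T}$) is always available, and then appeals to Lemma~\ref{alternatingtreelemma} to conclude that the matching eventually grows. Lemma~\ref{alternatingtreelemma} itself is a procedural statement about the local-search mechanics --- essentially that if a fresh add edge avoiding the tree's job vertices exists, then either it is unblocked (extend the matching) or its blocking edges can be appended (augment the tree) --- and its proof in \cite{AFS,AFS2} is a termination/monovariant argument for the local search, not an LP argument.

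Your ``substantive case'' is where the real gap lies. You assume every hyperedge incident to $M_\mathcal{T}$ meets $J_\mathcal{T}$, then invoke the LP imbalance $|M_\mathcal{T}|$ versus $|J_\mathcal{T}|$ and conclude that some add edge already in $\mathcal{T}$ must be unblocked. That conclusion does not follow. The weight imbalance you describe says that the configurations covering $M_\mathcal{T}$ carry too much mass to be absorbed by $J_\mathcal{T}$, so some configuration must place weight on jobs \emph{outside} $J_\mathcal{T}$; after passing to a minimal sub-configuration of size $\uptau/\delta$ this yields a hyperedge disjoint from $J_\mathcal{T}$, which \emph{contradicts} the hypothesis of your substantive case. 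In other words, the counting argument (done correctly, with the $\delta$-dependent bookkeeping you flag as an obstacle) shows the substantive case is impossible, not that an existing add edge becomes unblocked. The add edges in the tree were inserted precisely because they \emph{were} blocked at the time, and nothing in your LP inequality forces any of them to become unblocked later; that happens only through the separate augmenting-path/release mechanism of the local search, which is what Lemma~\ref{alternatingtreelemma} is actually about.
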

Even though this method can find an integer solution whose integrality gap is close to the best known theoretical value, it takes superpolynomial local search moves to find the perfect matching\cite{AFS,AFS2}.  
  
\section{Clustered instances seem promising}
Our first step is to start with the super machines obtained by clustering method due to Bansal and Sviridenko \cite{bansal}.  We first give the Algorithm outline.
\subsection{Algorithm Outline}
\label{sec:algooutline}
\noindent
Let $I=\langle \mathcal{M},\mathcal{J},\mathcal{P}\rangle$ be a Santa Claus instance. Now our algorithm on input $I$ proceeds as per the following outline.
\begin{enumerate}
\item Using binary search over the range of possible values of $\uptau$, we solve $CLP(I,\uptau)$ to find the highest value $T$ such that $CLP(I,T)$ is feasible \cite{bansal}.
\item Form the 12-gap instance $I_{12}=\langle \mathcal{M},\mathcal{J},\mathcal{P}_{12}\rangle$ for the given instance $I$ as per Definition \ref{alphagapinstance} given in section \ref{restrictedcase}.
\item Classify the set of jobs $\mathcal{J}$ into two categories -$\mathcal{J}_{big}$ and $\mathcal{J}_{small}$ containing the big jobs and the small jobs with respect to the parameters $\langle T, 12 \rangle $ as per Definition \ref{bigjobsmalljob} given in section \ref{restrictedcase}.
\item Solve $CLP(I_{12},T)$ resulting in feasible solution $\mathbf{x}$. We then classify the machines in $\mathcal{M}$ into two categories, upper class machines and middle class machines. Each upper class machine $i$ places at least $\frac{1}{2}$ of its total fractional weight from the solution $\mathbf{x}$ on big job configurations where as each middle class machine $i$ places at least $\frac{1}{2}$ of its total fractional weight from $\mathbf{x}$  on small job configurations. This step is detailed in section \ref{step1}.
\item If $I$ has no upper class machines, then each machine in the given instance $I$ places at least half of its total fractional weight on small jobs. Hence we find the 12-approximate solution which allocates only small job configurations to all the machines by solving the corresponding  Assignment LP
and then rounding the resulting fractional solution using Lemma \ref{bezakovadani} due to Bezakova and Dani\cite{bezakova}.
We detail this step in section \ref{case1}.
\item If $I$ has one or more upper class machines, then we do the following. 
\begin{enumerate}
\item Form the set of super machines $\mathcal{SM}$ out of the upper class machines and the set of composite machines $\mathcal{CM}$ using the machine job clustering technique due to Bansal and Sviridenko\cite{bansal} satisfying the conditions in Lemma \ref{clusteringlemma} given in section \ref{clusterformation}. The details of this step is given in section \ref{formingsupermachines}. 
\item Form the Extended Configuration LP(\emph{ECLP}) given in section \ref{existenceofTby6} which is an extension of the Configuration LP over composite machines. We observe that this configuration LP can be solved in polynomial time using the now well-known technique in \cite{bansal}. We use this solution to find a pair $\langle L,\mathbf{u} \rangle $ where $L$ is a list containing all the middle class machines and exactly one upper class machine from each super machine in $\mathcal{SM}$ and $\mathbf{u}$ is an fractional feasible solution that allocate small job configurations of size at least $\frac{T}{12}$ among the machines in $L$. 
We detail this step in section \ref{computingTby6solution}. The existence of such a solution is guaranteed by Theorem \ref{thm2} by proving the existence of an integer solution using the alternating tree technique due to Asadpour \textit{et al.}\cite{AFS}\cite{AFS2} given in section \ref{alternatingtreesection}. 
The details of Theorem \ref{thm2} is given in section \ref{existenceofTby6}. 
\item A particularly interesting result that we present is the conversion of a feasible solution to the ECLP to a feasible solution $\langle u \rangle$ of a bilinear  program which we call the Extended Assignment Problem (EAP).  The solution to the EAP plays a crucial role in obtaining an integer solution of size $\frac{T}{12}$ for the machines in the list $L$.  This ensures that we can assign small jobs among the machines in the list $L$ in such a way that each machine gets a total size of at least $\frac{T}{12}$.
This is presented in section \ref{computingTby6solution}.
\item Finally,
the upper class machines in each super machine which are not in list $L$ are assigned single big job from the corresponding super job. The resulting solution ensures that each machine in the given instance $I$ is assigned job configurations of size at least $\frac{T}{12}$. This step is detailed in section \ref{findingTby12solutionforcase2}.
\end{enumerate}
\end{enumerate}
In the following sections we prove properties of the correctness of all the steps in the outline presented above, and put them all together in the main theorem.

\subsection{Classifying the machines in the given instance}
\label{step1}
With reference to the solution $\mathbf{x}$, we now classify the set of machines $\mathcal{M}$ into two categories- $\mathcal{M}_{upper}$ and $\mathcal{M}_{middle}$ containing the upper class machines and middle class machines respectively as per the following definition. 
\begin{definition}
\label{upperclassmiddleclass}
A machine $i$ in $\mathcal{M}$ is said to be an {Upper Class Machine} if $\sum\limits_{\substack{C:C \in \mathcal{C}_{big}(i,T)}}x_{iC}\ge \frac{1}{2}$. A machine $i \in \mathcal{M}$ is said to be a {Middle Class machine} if it is not an upper class machine. That is a machine $i$ in $\mathcal{M}$ is said to be {upper class machine} or {middle class machine} according to whether the cumulative weight of the big gift configurations allocated to $i$ wth respect to the solution $\mathbf{x}$ is at least $\frac{1}{2}$ or not.
\end{definition}
Therefore,  for every middle class machine $i \in \mathcal{M}$, $\sum\limits_{\substack{C:C \in \mathcal{C}_{small}(i,T)}}x_{iC}\ge \frac{1}{2}$.  
\subsection{Dealing with instances having no Upper Class machines}
\label{case1}
We now show that if the given Santa Claus instance has no upper class machines with respect to $T$, then each machine in $I$ can be assigned small job configurations of size at least $\frac{T}{12}$. 
\begin{lemma}
\label{noupperclasslemma}
If there is no upper class machine, then a solution for the given Santa Claus instance $I$ that assigns small job configurations of size at least $\frac{T}{12}$ can be found in polynomial time. 
\end{lemma}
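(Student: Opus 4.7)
The plan is to convert $\mathbf{x}$, restricted to its small-configuration support, into a feasible fractional Assignment LP solution of value at least $T/2$ on the sub-instance consisting only of small jobs, and then apply the Bezakova--Dani rounding of Lemma \ref{bezakovadani}. Since every small job has size strictly less than $T/12$, the rounding loses less than $T/12$, so the resulting integer assignment has value at least $T/2 - T/12 = 5T/12 \ge T/12$, which is precisely the $12$-approximate solution demanded by the lemma.

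Concretely, for every machine $i \in \mathcal{M}$ and every small job $j \in \mathcal{J}_{small}$ I would define
\[
y_{ij} \;=\; \sum_{\substack{C \in \mathcal{C}_{small}(i,T) \\ j \in C}} x_{iC},
\]
setting $y_{ij} = 0$ for big jobs, and view $\mathbf{y}$ as a candidate feasible solution of the Assignment LP on the sub-instance $I' = \langle \mathcal{M},\mathcal{J}_{small},\mathcal{P}|_{\mathcal{J}_{small}}\rangle$. The job constraints of $CLP(I_{12},T)$ immediately give $\sum_{i} y_{ij} \le 1$ and therefore $y_{ij} \in [0,1]$. For the machine constraints, minimality of each small configuration $C$ gives $\sum_{j \in C}\mathcal{P}[i,j] \ge T$ (using that $\mathcal{P}[i,j]=\mathcal{P}_{12}[i,j]$ whenever $j$ is small), hence
\[
\sum_{j \in \mathcal{J}_{small}} \mathcal{P}[i,j]\,y_{ij} \;=\; \sum_{C \in \mathcal{C}_{small}(i,T)} x_{iC}\sum_{j \in C}\mathcal{P}[i,j] \;\ge\; T\sum_{C \in \mathcal{C}_{small}(i,T)} x_{iC} \;\ge\; \frac{T}{2},
\]
where the final inequality uses Definition \ref{upperclassmiddleclass} together with the hypothesis that $\mathcal{M}$ contains no upper class machine, \ie\ every machine places at least half of its fractional weight on small configurations.

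The last step is to apply Lemma \ref{bezakovadani} to $\mathbf{y}$. Since only small jobs appear in $I'$, the maximum job size there is strictly less than $T/12$, so the rounding yields an integer solution of $ALP(I')$ of value strictly more than $T/2 - T/12 = 5T/12 \ge T/12$. Lifted back to $I$, this is an allocation of small jobs in which every machine receives total size at least $T/12$. All three steps --- solving $CLP(I_{12},T)$ for $\mathbf{x}$ (as recalled in Section \ref{ALPCLP}), constructing $\mathbf{y}$ from $\mathbf{x}$, and Bezakova--Dani rounding --- run in polynomial time. I do not foresee a genuine obstacle here: the ``no upper class machine'' hypothesis is exactly what is needed to guarantee that the small-configuration part of $\mathbf{x}$ alone already carries at least half the CLP weight on every machine, so the fractional-to-integral loss of at most $T/12$ is easily absorbed.
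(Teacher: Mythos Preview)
Your proposal is correct and follows essentially the same route as the paper: use the ``no upper class machine'' hypothesis to ensure each machine carries at least half its CLP weight on small configurations, convert this to an Assignment LP solution of value at least $T/2$ on the small-job sub-instance, and apply Lemma~\ref{bezakovadani} to lose at most $T/12$. The only difference is cosmetic---you unpack the CLP-to-ALP conversion explicitly, whereas the paper invokes Lemma~\ref{configurationLPtoAssignmentLP}.
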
    
\begin{proof}
Let the given Santa Claus $I$ be such that there is no upper class machine in instance $I_{12}$ with reference to the  $CLP(I_{12},T)$ solution $\mathbf{x}$. This implies that $\sum\limits_{\substack{C:C\in \mathcal{C}_{small}(i,T)}}x_{iC}\ge \frac{1}{2}$ for each machine $i$ in $I_{12}$. Hence by Lemma \ref{configurationLPtoAssignmentLP}, there exists fractional feasible solution $\mathbf{y}$ for $ALP(I_{12})$ that fractionally allocates small jobs to each machine in $I$ such that each machine gets size at least $\frac{T}{2}$. Now applying Lemma \ref{bezakovadani}, we can convert the fractional solution $\mathbf{y}$ to a solution $\mathbf{y}'$ for $I_{12}$ that allocates small jobs to each machine in $I_{12}$ such that every machine gets size at least $\frac{T}{2}-\frac{T}{12}=\frac{5T
}{12}\ge \frac{T}{12}$. Since the small jobs in both $I$ and $I_{12}$ are the same and have the same size, we see that $\mathbf{y}'$ is a solution for instance $I$ as well. Hence we claim that there exists a solution for the given instance $I$ such that each machine in $I$ is assigned small job configurations of size at least $\frac{T}{12}$.

Clearly the solution $\mathbf{y}'$ can be obtained by solving the Assignment LP on the subinstance $I'$ of $I_{12}$ that consists of all the machines and only the smal jobs in $I$. Since solving $ALP(I')$ and finding the integer solution mentioned in Lemma \ref{bezakovadani} can be found in polynomial time, the solution $\mathbf{y}'$ can be found in polynomial time. Hence the proof.     
\end{proof} 

\subsection{Dealing with instances having one or more upper class machines}  
We now consider the case where there are one or more upper class machines in $\mathcal{M}$ with respect to $\mathbf{x}$. 
\subsubsection{Forming the Super Machines and Composite Machines}
\label{formingsupermachines}
Let $I'_{12}$ be the instance obtained by $I_{12}$ such that $I'_{12}$ consists of the upper class machines in $\mathcal{M}$ and the big jobs which are fractionally allocated to the upper class machines by the fractional solution $\mathbf{x}$. Let $\mathbf{x}'$ be the restriction of $\mathbf{x}$ with respect to the instance $I'_{12}$. Now we form the graph $G$ mentioned in section \ref{clusterformation} and apply the Lemma \ref{cyclekillinglemma} on $G$ to get the fractional feasible solution $\mathbf{x^*}$ mentioned in the Lemma. We then apply Lemma \ref{clusteringlemma} on $\mathbf{x}^{*}$, to form the set of super machines denoted by $\mathcal{SM}$. In the remainder of this paper, we use the notation $\mathcal{CM}$ to denote the set $\mathcal{SM} \cup \Big\{\{i\} \Big| i \in \mathcal{M}_{middle} \Big\}$ and call each member of the set $\mathcal{CM}$ to be a \emph{Composite Machine}. We note that each composite machine is either a super machine or a singleton set consisting of a middle class machine. With this convention we unambiguously reason about machines inside a composite machine.

\subsubsection{Existence of a $\frac{T}{6}$ size integer solution for the Composite machines}
\label{existenceofTby6} 
We now show that there exists an integer solution that allocates small job configurations among the composite machines in such a way that at least one machine from each composite machine in $\mathcal{CM}$ gets a total size of at least $\frac{T}{6}$. 

Now we define the \emph{Modified Configuration LP} which is a modification of the Configuration LP in two respects. 
\begin{enumerate}
\item It is specifically written for composite machines rather than single machines.
\item All the jobs considered in it are small jobs. 
\item The first set of constraints in it requires that every composite machine should have a total weight of at least $\frac{1}{2}$ from small job configurations. This means that each of the super machines in $\mathcal{SM}$ and each middle class machine in $\mathcal{M}_{middle}$ must have a total fractional weight of at least $\frac{1}{2}$ from small jobs.
\end{enumerate}
\begin{figure}[tbph]
\centering
\begin{alignat*}{3}
  & \text{maximize}   & \quad & 0          &&\nonumber \\
  & \text{subject to} &       & \sum\limits_{i \in D}\bigg(\sum\limits_{C \in \mathcal{C}_{small}(i,T)}z_{iC}\bigg)\ge \frac{1}{2} &\hspace{10pt}& \forall D \in \mathcal{CM}\\
  &                   &       & \sum\limits_{i \in \mathcal{M}}\bigg(\sum\limits_{\substack{C: \\ C \in \mathcal{C}_{small}(i,T)\\ j \in C}}z_{iC}\bigg)\le 1  & \hspace{10pt} & \forall j \in \mathcal{J}_{small} \\
  &                   &       &  0\le z_{iC}\le 1   & \hspace{10pt} & \forall i\in \mathcal{M}\text{,} \forall C \in \mathcal{C}_{small}(i,T) \nonumber
\end{alignat*}
\caption{Modified Configuration LP}
\label{ModifiedConfigurationLP}
\end{figure}

Now we prove the following Lemma.
\begin{lemma}
\label{modifiedconfigLP}
The Modified Configuration LP is feasible.
\end{lemma}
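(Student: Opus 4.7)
The plan is to prove feasibility by exhibiting an explicit feasible solution rather than arguing abstractly. The obvious candidate is to set $z_{iC} := x_{iC}$ for every $i \in \mathcal{M}$ and every small configuration $C \in \mathcal{C}_{small}(i,T)$, where $\mathbf{x}$ is the feasible solution of $CLP(I_{12},T)$ that was fixed before the classification and clustering steps. Since everything in the Modified Configuration LP is indexed only over small configurations, and $\mathbf{x}$ already carries all the structural information I need, nothing more elaborate should be required.

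For the second (job-packing) constraint and the box constraints the verification is immediate. For any small job $j$, feasibility of $\mathbf{x}$ for $CLP(I_{12},T)$ gives
\begin{equation*}
\sum_{i \in \mathcal{M}} \sum_{\substack{C \in \mathcal{C}(i,T)\\ j \in C}} x_{iC} \le 1,
\end{equation*}
and restricting the inner sum to $\mathcal{C}_{small}(i,T) \subseteq \mathcal{C}(i,T)$ only decreases the left-hand side, so the job-packing constraint for $\mathbf{z}$ holds. The bounds $0 \le z_{iC} \le 1$ are inherited directly from $\mathbf{x}$.

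The interesting work is in the first (composite-machine) constraint, and I would handle it by a case split on the two types of composite machines in $\mathcal{CM}$. If $D = \{i\}$ is a singleton with $i \in \mathcal{M}_{middle}$, then by Definition \ref{upperclassmiddleclass} and the fact that $i$ is not upper class, $\sum_{C \in \mathcal{C}_{small}(i,T)} x_{iC} \ge \tfrac{1}{2}$, which is exactly what is needed. If $D = M_k$ is a super machine in $\mathcal{SM}$, then I would invoke the third conclusion of the Clustering Lemma applied to $\mathbf{x}^{*}$, namely $\sum_{i \in M_k} \sum_{C \in \mathcal{C}_{small}(i,T)} x^{*}_{iC} \ge \tfrac{1}{2}$, and then use Lemma \ref{cyclekillinglemma} to rewrite $x^{*}_{iC} = x_{iC}$ on small configurations, concluding $\sum_{i \in M_k} \sum_{C \in \mathcal{C}_{small}(i,T)} z_{iC} \ge \tfrac{1}{2}$.

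The main thing to watch out for is the bookkeeping between $\mathbf{x}$, $\mathbf{x}'$ (the restriction to the subinstance $I'_{12}$ of big jobs and upper-class machines on which clustering is performed), and $\mathbf{x}^{*}$ (the post-cycle-elimination solution). The conclusion of the Clustering Lemma is naturally phrased on $\mathbf{x}^{*}$, whereas the Modified Configuration LP is built from weights coming from the original $\mathbf{x}$; the bridge between them is precisely the invariance clause ${x^{*}}_{iC} = x_{iC}$ for $C \in \mathcal{C}_{small}(i,T)$ in Lemma \ref{cyclekillinglemma}, which says that cycle elimination touches only big-job edges and leaves small-configuration mass untouched. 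Once this is made explicit, the feasibility of the Modified Configuration LP follows without any further estimation.
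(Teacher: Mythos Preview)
Your proposal is correct and follows essentially the same approach as the paper: exhibit the small-configuration part of the existing Configuration LP solution as a feasible point, verifying the composite-machine constraint via Definition~\ref{upperclassmiddleclass} for middle-class singletons and via the Clustering Lemma for super machines, and the job-packing constraint by restriction. The only cosmetic difference is that the paper names the witness $\mathbf{x}^{*}$ while you name it $\mathbf{x}$, but since Lemma~\ref{cyclekillinglemma} guarantees $x^{*}_{iC}=x_{iC}$ on all small configurations (the only ones appearing in the Modified Configuration LP), these are literally the same assignment; your explicit handling of this $\mathbf{x}/\mathbf{x}^{*}$ bookkeeping is, if anything, slightly cleaner than the paper's.
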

\begin{proof}
Consider the fractional solution $\mathbf{x}^{*}$ computed above as a result of applying Lemma \ref{cyclekillinglemma}. From the construction of the feasible solution $\mathbf{x}^{*}$ of $CLP(I_{12},T)$ computed above and Lemma \ref{cyclekillinglemma}, it is clear by definition that corresponding to each middle class machine $i$ and $C \in \mathcal{C}_{small}(i,T)$, $\sum\limits_{C \in \mathcal{C}_{small}}{x^*}_{iC}\ge \frac{1}{2}$. This implies that corresponding to each composite machine $D$ representing a middle class machine $i$, we must have $\sum\limits_{C \in \mathcal{C}_{small}}{x^*}_{iC}\ge \frac{1}{2}$. Furthermore it follows from Lemma \ref{clusteringlemma} that corresponding to each composite machine representing some super machine $M$, $\sum\limits_{i \in M}\bigg(\sum\limits_{C \in \mathcal{C}_{small})}{x^*}_{iC}\bigg)\ge \frac{1}{2}$. Hence we see that $\sum\limits_{i \in D}\bigg(\sum\limits_{C \in \mathcal{C}_{small})}{x^*}_{iC}\bigg)\ge \frac{1}{2}$ for each composite machine $D \in \mathcal{CM}$, thereby satisfying the first set of constraints of the Modified Configuration LP. It also follows from the construction of the solution $\mathbf{x}^{*}$ and Lemma \ref{cyclekillinglemma} that the fractional allocation of a small job configuration $C$ to a machine $i$ in $\mathbf{x}^{*}$ is not more than the same in the initial feasible solution $\mathbf{x}$ which we used to classify the set of machines into upper class machines and middle class machines. Hence it becomes clear that \[\sum\limits_{i \in \mathcal{M}}\bigg(\sum\limits_{\substack{C: \\ C \in \mathcal{C}_{small}(i,T)\\ j \in C}}{x^*}_{iC}\bigg)\le \sum\limits_{i \in \mathcal{M}}\bigg(\sum\limits_{\substack{C: \\ C \in \mathcal{C}_{small}(i,T)\\ j \in C}}x_{iC}\bigg)\le 1 \text{ for each small job } j \in \mathcal{J}_{small}.\] Hence $\mathbf{x}^{*}$ satisfies the second set of constraints of the Modified Configuration LP. Hence we see that the solution $\mathbf{x}^{*}$ satisfies all the constraints of the Modified Configuration LP and is therefore a feasible solution of this LP. Hence the proof.     
\end{proof} 
  
Now in order to prove the existence of a $\{0,1\}$ solution that allocate small job configurations of size at least $\frac{T}{6}$ among the composite machines, we use the alternating tree method\cite{AFS}, where we pose the problem as a hypergraph matching problem in a suitably defined bipartite hypergraph $\mathcal{H}$ using the solution $\mathbf{x}^{*}$ mentioned in Lemma \ref{modifiedconfigLP} and then find a perfect matching in $\mathcal{H}$ by forming the alternating tree and growing it as stated in Lemma \ref{alternatingtreelemma}.    

We define the bipartite hypergraph $\mathcal{H}=((U,V),F)$ with some differences. Here the vertices in $U$ are labelled by the composite machines in $\mathcal{CM}$, the vertices in $V$ are labelled by small jobs and the hyperedges in $F$ are formed as per the following rules.
\begin{itemize}
\item For each composite machine $D$ representing a middle class machine $i$ and for each small configuration $C$, if ${x}^{*}_{iC}>0$, then for each minimal configuration $C'$ of size $\frac{T}{6}$ such that $C' \subset C$, we add a hyperedge connecting $D$ and $C'$. 
\item For each composite machine representing a super machine $M$ and for each small configuration $C$, if there exists an upper class machine $i$ within the super machine $M$ such that ${x^*}_{iC}>0$, then for each minimal configuration $C'$ of size $\frac{T}{6}$ such that $C' \subset C$, we add a hyperedge containing $i$ and $C'$.   
\end{itemize}  
By the way of formation of the hyperedges within $\mathcal{H}$, we see that the size of the jobs present in each hyperedge is strictly less than $\frac{T}{6}+\frac{T}{12}$. We now show that the bipartite hypergraph $\mathcal{H}$ constructed above has a perfect matching.  Further, note that unlike in Asadpour \textit{et.al} \cite{AFS}\cite{AFS2}, we do not have any big job in the hypergraph.  They have all been taken care of in Bansal and Sviridenko's clustering step.  
\begin{lemma}
\label{perfectmatchinglemma}
The bipartite hypergraph $\mathcal{H}$ constructed above has a perfect matching that matches each composite machine with a small job configuration having size at least $\frac{T}{6}$. 
\end{lemma}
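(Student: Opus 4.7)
The plan is to invoke the alternating tree method of Asadpour et al.\ (Lemma \ref{alternatingtreelemma}) on the hypergraph $\mathcal{H}$, using the feasible solution $\mathbf{x}^{*}$ of the Modified Configuration LP from Lemma \ref{modifiedconfigLP} as the LP witness that $\mathcal{H}$ is rich enough to admit a perfect matching covering $U$. Starting from the empty matching, one picks at each phase some uncovered composite machine $r \in \mathcal{CM}$ as the root and grows an alternating tree of add-edges and blocking-edges until the matching can be extended to cover $r$. Lemma \ref{alternatingtreelemma} guarantees that every single growth step either extends the current matching or strictly enlarges the tree with a hyperedge whose job-vertices are fresh, so the only remaining task is to prove that an extension step is always eventually available, i.e.\ that tree growth cannot continue indefinitely.

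To force termination I would run a Haxell-style counting argument in the spirit of the original Asadpour et al.\ proof. For any subset $S$ of composite machines that could appear in the alternating tree, the first constraint of the Modified Configuration LP places at least $|S|/2$ total $\mathbf{x}^{*}$-weight on small-job configurations incident to the machines in $\bigcup S$. Each such LP configuration has total size at least $T$, and because every small job has size strictly less than $T/12$, the configuration contains many minimal sub-configurations of size lying in $[T/6,\,T/4)$; by construction of $\mathcal{H}$, each such sub-configuration is the job-side endpoint of a hyperedge incident to some machine of $S$. This produces a quantitative lower bound on the number of candidate add-edges originating from $S$. On the other hand, each blocking hyperedge in the current matching carries at most one configuration's worth of jobs, so the size slack $T/4 - T/6 = T/12$ controls how many candidate add-edges a single blocking hyperedge can kill. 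A supply-versus-demand comparison then shows that the candidate-add side strictly dominates, so the first alternative in Lemma \ref{alternatingtreelemma} must eventually apply, and the current matching gets strictly extended.

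The main obstacle is the quantitative form of this supply-versus-demand inequality, which amounts to verifying a Haxell-type hypergraph-matching condition for $\mathcal{H}$ relative to $\mathbf{x}^{*}$: one must show that every subset $S$ of composite machines enjoys strictly more hyperedges into \emph{fresh} job-vertices than the currently blocked hyperedges can obstruct, and that this inequality is preserved as the tree grows. Once this inequality is set up, an inductive application of the extension phase over all composite machines produces a matching that covers every $D \in \mathcal{CM}$, and by the very definition of the hyperedges of $\mathcal{H}$ each matched configuration $C'$ has size in $[T/6,\,T/4)$, which is exactly the conclusion asserted by the lemma.
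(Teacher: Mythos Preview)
Your framework matches the paper's exactly: invoke the alternating-tree machinery with the Modified Configuration LP solution $\mathbf{x}^{*}$ as fractional witness, and argue that the tree cannot grow forever without extending the matching. The divergence is in the quantitative ``supply-versus-demand'' step, which you leave as a sketch and frame in the wrong currency.

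You propose to count \emph{candidate add-edges} and bound how many a single blocking edge can ``kill.'' This does not close: a single LP configuration $C$ of size $T$ spawns exponentially many minimal sub-configurations of size $\tfrac{T}{6}$, and a single blocking hyperedge (whose job set has total size below $\tfrac{T}{4}$) can intersect exponentially many of them as well, so a raw edge count gives no useful inequality. The ``size slack $T/4-T/6=T/12$'' you mention does not by itself bound the number of intersected edges.

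The paper instead accounts by \emph{total job size}. Let $A,B$ be the add- and blocking-edge sets in the stuck tree and let $X$ be the total size of the jobs in $(A\cup B)_V$. If the tree cannot be augmented, then for every $C$ with $x^{*}_{iC}>0$ and $i$ in the tree, the part of $C$ outside $(A\cup B)_V$ has size $<\tfrac{T}{6}$ (otherwise it would contain a fresh add-edge), so more than a $\tfrac{5}{6}$-fraction of $C$'s size lies inside; summing with weights gives $X > \tfrac{5}{6}\gamma$. On the other side each add-edge contributes at most $\tfrac{T}{6}+\tfrac{T}{12}$ to $X$, each blocking edge at most an additional $\tfrac{T}{6}$ beyond its blocked add-edge, and $|A|\le|B|$, so $X \le \tfrac{5T}{12}|B|$. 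Combining with $\gamma \ge T\cdot\sum_{D}\sum_{i\in D}\sum_{C} x^{*}_{iC}$ yields $\tfrac{|B|}{2} > \sum_{D}\sum_{i\in D}\sum_{C} x^{*}_{iC}$; but the tree touches at least $|B|+1$ composite machines, each carrying LP weight $\ge \tfrac{1}{2}$ by the first Modified-CLP constraint, giving the opposite inequality and hence a contradiction.

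So the missing ingredient in your outline is precisely this switch from counting hyperedges to measuring total job size; once that substitution is made the constants $\tfrac{1}{2}$, $\tfrac{T}{6}$, $\tfrac{T}{12}$ lock together and your sketch becomes the paper's proof.
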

\begin{proof}
We prove the Lemma using the method of contradiction. In particular, we prove that a given partial matching that does not match all the composite machines in $\mathcal{CM}$ can be extended to match an additional machine which in turn implies that $\mathcal{H}$ has a perfect matching. The way we do this is to use an argument identical to the one in Asadpour \textit{et al.}\cite{AFS,AFS2} which we have presented in Lemma \ref{alternatingtreelemma}. They show that if the alternating tree can be augmented with a hyperedge, that is a new hyperedge can be added to the tree which out intersecting any hyperedge already in the tree, then the constructive procedure will eventually match one additional machine. Therefore, we prove that the alternating tree can be augmented with one hyperedge from $\mathcal{H}$.  In the remainder of this proof, for a set $W$ of edges in the graph $\mathcal{H}$ we use the notation $W_U$ and $W_V$ to denote the set of vertices of $U$ which are incident at some edge in $W$ and the set of vertices of $V$ which are incident at some edge in $W$, respectively.

Assume that, at some step in growing the alternating tree, a new hyperedge from $\mathcal{H}$ cannot be added to alternating tree without intersecting a hyperedge that is already in the alternating tree.  
Let $A,B$ and $\mathcal{CM}'$ respectively denote the set of Add edges in the alternating tree, the set of Blocking edges in the alternating tree, and the set of vertices of $U$ matched by the current matching at this step. Let $X$ denote the total size of the jobs present in the edges within the edges in the alternating tree and $\gamma=\sum\limits_{D \in \mathcal{CM}'}\bigg(\sum\limits_{i \in D}\Big(\sum\limits_{C: C \in {C}_{small}(i,T)}{x^*}_{iC}\times \text{ size of C} \Big)\bigg)$.  together with the fact that the alternating tree cannot be grown further, we see that it is not possible to find an edge $e$ in $\mathcal{H}$ such that $e$ is incident at some machine vertex in $(A \cup B)_{U}$ and does not intersect with any of the job vertices in $(A \cup B)_{V}$. This further implies that the quantity $X$ representing the total size of the jobs appearing in the edges of the alternating tree satisfies the condition
\begin{eqnarray}
\label{relation1}
X & = & \sum\limits_{D \in \mathcal{CM}'}\Bigg(\sum\limits_{i \in D}\bigg(\sum\limits_{C:C \in \mathcal{C}_{small}(i,T)}{x^*}_{iC}\Big(\sum\limits_{j: j \in C \cap (A \cup B)_{V}} \mathcal{P}_{\alpha}[i,j] \Big) \bigg)\Bigg) >\frac{5}{6}\gamma
\end{eqnarray}   
Another way of accounting for the $X$ is obtained by taking the sum of the size of the jobs within the edges in $A$, and in each blocking hyperedge $e$ in $B$, we take the size of the jobs which are not present in the Add  edge in $A$ that is blocked by $e$. Due to the minimality of the hyperedges in $\mathcal{H}$, we also see that each hyperedge within $A$ contains as many jobs as to give a maximum size of $\frac{T}{6}+\frac{T}{12}$ and for each blocking hyperedge $e$ in $B$, the total size of the jobs which are present in $e$ and not present in the Add edge in $A$ that is blocked $e$ is at most $\frac{T}{6}$. In other words, each hyperedge in $A$ contributes at most $\frac{T}{6}+\frac{T}{12}$ and each hyperedge in $B$ contributes at most $\frac{T}{6}$ to $X$. Thus we have 
\begin{eqnarray*}
X & \le & |A|\Big(\frac{T}{6}+\frac{T}{12}\Big)+\frac{T}{6}|B| 
\end{eqnarray*}  
Since every Add edge in $A$ is blocked by at least one hyperedge edge in $B$ and all these blocking hyperedges are different, we must have $|A|\le |B|$. This implies that $X  \le  \frac{5T}{12}|B|$. Using this relation in (\ref{relation1}), we get $\frac{5T}{12}|B| > \frac{5}{6} \gamma$. Hence we have
\begin{eqnarray*}\\
\frac{T}{2}|B| &> & \sum\limits_{D \in \mathcal{CM}'}\bigg(\sum\limits_{i \in D}\Big(\sum\limits_{C: C \in C_{small}(i,T)}{x^*}_{iC}\times \text{ size of C} \Big)\bigg)
\end{eqnarray*}
Since the size of each configuration in $C_{small}$ is at least $T$, we see that
\begin{eqnarray*}
\frac{T}{2}|B| &> & T\sum\limits_{D \in \mathcal{CM}'}\bigg(\sum\limits_{i \in D}\Big(\sum\limits_{C: C \in C_{small}(i,T)}{x^*}_{iC} \Big)\bigg)
\end{eqnarray*}
which in turn implies that 
\begin{eqnarray}
\label{relation2}
\frac{1}{2}|B| &> & \sum\limits_{D \in \mathcal{CM}'}\bigg(\sum\limits_{i \in D}\Big(\sum\limits_{C: C \in C_{small}(i,T)}{x^*}_{iC}\Big)\bigg)
\end{eqnarray}
Since the number of composite machines in $\mathcal{CM'}$ is at least $|B|+1$, it is evident from relation (\ref{relation2}) that there exists a composite machine $D' \in \mathcal{CM}'$ such that $\sum\limits_{i \in D'}\Big(\sum\limits_{C: C \in C_{small}(i,T)}{x^*}_{iC}\Big)<\frac{1}{2}$. But this contradicts the fact that $\mathbf{x}^{*}$ is a feasible solution of the Modified Configuration LP and hence it satisfies the constraint $\sum\limits_{i \in D'}\Big(\sum\limits_{C: C \in C_{small}(i,T)}{x^*}_{iC}\Big)\ge \frac{1}{2}$. Hence we have proved that  the alternating tree can be augmented at any step if it cannot increase the size of the matching.
Thus, using Lemma \ref{alternatingtreelemma} it follows that we can find a perfect matching in $H$. Hence the proof.
\end{proof}
\noindent
Using Lemma \ref{perfectmatchinglemma}, we now prove the existence of the integer solution mentioned in the beginning of this section.  Next we define a new linear program namely \emph{Extended Configuration LP} which is given in Figure \ref{ECLP}.
\begin{figure}[tbph]
\centering
\begin{alignat*}{3}
  & \text{maximize}   & \quad & 0          &&\nonumber \\
  & \text{subject to} &       & \sum\limits_{i \in D}\bigg(\sum\limits_{C \in \mathcal{C}_{small}(i,\frac{T}{6})}z_{iC}\bigg)=1 &\hspace{10pt}& \forall D \in \mathcal{CM}\\
  &                   &       & \sum\limits_{i \in \mathcal{M}}\bigg(\sum\limits_{\substack{C: \\ C \in \mathcal{C}_{small}(i,\frac{T}{6})\\ j \in C}}z_{iC}\bigg)\le 1  & \hspace{10pt} & \forall j \in \mathcal{J}_{small} \\
  &                   &       &  0\le z_{iC}\le 1   & \hspace{10pt} & \forall i\in \mathcal{M}\text{,} \forall C \in \mathcal{C}_{small}(i,\frac{T}{6}) \nonumber
\end{alignat*}
\caption{Extended Configuration LP}
\label{ECLP}
\end{figure}
\begin{theorem}
\label{thm2}
There exists a $\{0,1\}$ feasible solution for ECLP.
\end{theorem}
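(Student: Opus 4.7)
The plan is to read off a $\{0,1\}$ solution to the ECLP directly from the perfect matching produced by Lemma \ref{perfectmatchinglemma}. All of the combinatorial work has already been done there; what remains is a translation from hyperedges to ECLP variables and a check that the two constraint families of Figure \ref{ECLP} are satisfied.

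First I would invoke Lemma \ref{perfectmatchinglemma} to obtain a perfect matching $M^{\star} \subseteq F$ in the bipartite hypergraph $\mathcal{H} = ((U,V),F)$. By construction every hyperedge of $\mathcal{H}$ is either of the form ``middle class composite machine $D = \{i\}$ together with a minimal configuration $C'$ of size $\tfrac{T}{6}$ with $C' \subseteq C$ for some $C$ with ${x^*}_{iC} > 0$'' or ``super machine composite machine $D = M$ together with a minimal configuration $C'$ of size $\tfrac{T}{6}$ witnessed by some upper class machine $i \in M$ with ${x^*}_{iC}>0$ for $C \supseteq C'$''. In either case each chosen hyperedge $e \in M^{\star}$ canonically determines a composite machine $D_e \in \mathcal{CM}$, a machine $i_e \in D_e$, and a configuration $C'_e \in \mathcal{C}_{small}(i_e, \tfrac{T}{6})$.

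Next I would define the candidate ECLP solution $\mathbf{z}$ by
\begin{equation*}
z_{iC} \;=\; \begin{cases} 1 & \text{if there exists } e \in M^{\star} \text{ with } i_e = i \text{ and } C'_e = C, \\ 0 & \text{otherwise,}\end{cases}
\end{equation*}
which is clearly a $\{0,1\}$ vector indexed correctly over pairs $(i,C)$ with $C \in \mathcal{C}_{small}(i,\tfrac{T}{6})$. The first constraint family of the ECLP then reduces, for a fixed composite machine $D$, to counting the hyperedges of $M^{\star}$ incident to the $U$-vertex labelled $D$; since $M^{\star}$ is a perfect matching on $U$, this count is exactly $1$, so $\sum_{i \in D}\sum_{C \in \mathcal{C}_{small}(i,T/6)} z_{iC} = 1$ for every $D \in \mathcal{CM}$. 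The second constraint family requires that every small job $j$ is used by at most one selected configuration; but this is precisely the matching condition on the $V$ side of $\mathcal{H}$, since a matching can cover each job vertex $j \in V$ at most once and $j \in C'_e$ iff the hyperedge $e$ covers $j$.

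The only genuine obstacle is ensuring that the translation $e \mapsto (i_e, C'_e)$ is well defined and that the resulting configurations lie in $\mathcal{C}_{small}(i_e, \tfrac{T}{6})$; this is immediate from the construction of $\mathcal{H}$, where hyperedges are added using minimal configurations of size $\tfrac{T}{6}$ and each hyperedge records the underlying machine $i$ supplying the supporting small configuration $C$ under $\mathbf{x}^{\star}$. All the real difficulty, namely the existence of the perfect matching itself, has been absorbed into Lemma \ref{perfectmatchinglemma} via the alternating tree argument, so no further analysis is needed and the theorem follows.
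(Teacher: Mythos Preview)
Your proposal is correct and takes essentially the same approach as the paper: invoke Lemma \ref{perfectmatchinglemma} to obtain a perfect matching in $\mathcal{H}$, read off from each matched hyperedge the machine $i$ and the minimal configuration $C'$ it carries, set the corresponding $z_{iC'}=1$, and verify the two ECLP constraint families via the perfect-matching property on $U$ and the matching property on $V$. The paper's proof is organized slightly differently (it splits into the middle-class and super-machine cases explicitly and packages the verification as a separate claim), but the content is identical.
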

\begin{proof}
It follows from Lemma \ref{perfectmatchinglemma} that there exists a perfect matching in the graph $\mathcal{H}$ constructed as specified above such that the matching matches every composite machine with a small configuration of size at least $\frac{T}{6}$. Let $\mathscr{M}$ be this perfect matching. For each composite machine $D \in \mathcal{CM}$, let $C_D$ denote the small job configuration such that the matching $\mathscr{M}$ matches the composite machine $D$ with configuration $C_D$.  Now we define the vector $\mathbf{z}'$ as follows.
\begin{itemize}
\item For each middle class machine $i$ represented by the composite machine $D \in \mathcal{CM}$, we set $z'_{iC_D}=1$ and $z'_{iC}=0$ for each $C \in \mathcal{C}_{small}(i,\frac{T}{6})$ such that $C \neq C_D$.
\item For each super machine $M$ represented by the composite machine $D \in \mathcal{CM}$, let $i' \in M$ be the upper class machine that is matched in the perfect matching $\mathscr{M}$.
Then we set $z'_{i'C_D}=1$ and $z'_{i'C}=0$ for each $C \in \mathcal{C}_{small}(i,\frac{T}{6})$ such that $C \neq C_D$. Furthermore, for each machine $i \in M$ such that $i \neq i'$, we set $z'_{iC}=0$ for all $C \in \mathcal{C}_{small}(i,\frac{T}{6})$.
\end{itemize} 
\begin{claim}
\label{claim1}
$\mathbf{z}'$ is a $\{0,1\}$ feasible solution of ECLP.
\end{claim}
\begin{proof}
From the construction of $\mathbf{z}'$, we note the following.
\begin{itemize}
\item Corresponding to each composite machine $D$ there is exactly one machine $i'$ such that $z'_{i'C_D}=1$ and $z'_{iC}=0$ for each $C \in \mathcal{C}_{small}(i,\frac{T}{6})$ such that $C \neq C_D$.  Furthermore, for each $i \in M$ such that $i \neq i'$, we have $z'_{iC}=0$ for each $C \in \mathcal{C}_{small}(i,\frac{T}{6})$. This implies that $\sum\limits_{i \in D}\bigg(\sum\limits_{C \in \mathcal{C}_{small}(i,\frac{T}{6})}z_{iC}\bigg)=1$. Hence $\mathbf{z}'$ satisfies the first set of constraints of \emph{ECLP}.
\item No small job is allocated to more than one machine and hence $\mathbf{z}'$ satisfies the second set of constraints of \emph{ECLP}.
\item $z'_{iC} \in \{0,1\}$ for each $ i\in \mathcal{M}$ and $C \in \mathcal{C}_{small}(i,\frac{T}{6})$.
\end{itemize}
Hence the claim.
\end{proof}
Now by Claim \ref{claim1}, the Lemma immediately follows.
\end{proof}

\noindent
We see that similar to Configuration LP, \emph{ECLP} has exponentially many decision variables and polynomially many constraints. Therefore we can use the technique due to Bansal and Sviridenko \cite{bansal} which is mentioned in section \ref{ALPCLP} for solving the Configuration LP to solve \emph{ECLP} as well in polynomial time. We easily see that the seperation oracle for the dual program corresponding to \emph{ECLP} is also the minimum knapsack problem. Hence \emph{ECLP} can also be solved in polynomial time exactly in the same way as Configuration LP is solved in polynomial time. 

\subsubsection{Efficient Computation of a $\frac{T}{6}$ size fractional solution for the composite machines}
\label{computingTby6solution}
By Theorem \ref{thm2}, it is clear that there exists an integer  solution that assigns small job configurations to the composite machines in such a way that at least one of the machines in each composite machine gets a total size of at least $\frac{T}{6}$. 
We propose a method for using the existence of such an integer solution in a polynomial time algorithm. Towards this end, we define a new mathematical program namely \emph{Extended Assignment Program}(\emph{EAP}) and is given in Figure \ref{fig:EAP}. 
\begin{figure}[!ht]
\centering
\begin{alignat}{3}
  & \text{minimize}   & \quad & 0          &&\nonumber \\
  & \text{subject to} &       & \sum\limits_{i \in D}\Big(s_i\sum\limits_ {j \in \mathcal{J}_{small}}u_{ij}\mathcal{P}_{12}[i,j]\Big)\ge \frac{T}{6} &\hspace{10pt}& \forall D \in \mathcal{CM} \label{eqn:EAP1}\\
  &                   &       & \sum\limits_{i \in D}s_i= 1  & \hspace{10pt} & \forall D \in \mathcal{CM} \label{eqn:EAP2} \\
   &                   &       & \sum\limits_{i \in \mathcal{M}}u_{ij}\le 1  & \hspace{10pt} & \forall j \in \mathcal{J}_{small} \label{eqn:EAP3}\\
  &                   &       &  0\le u_{ij}\le 1   & \hspace{10pt} & \forall i\in \mathcal{M}\text{,} \forall j \in \mathcal{J}_{small} \label{eqn:EAP4} \\
  &                   &       &  0\le s_i\le 1   & \hspace{10pt} & \forall i\in \mathcal{M}  
\end{alignat}
\caption{Extended Assignment Program}
\label{fig:EAP}
\end{figure} 
This mathematical program consists of a decision variable $u_{ij} \in [0,1]$ which indicates whether small job $j$ is allocated to machine $i$ or not. In addition, \emph{EAP} consists of a decision variable $s_i \in [0,1]$ for each machine $i \in \mathcal{M}$ which further controls the allocation of small jobs within machine $i$. The first set of constraints of the program states that the sum of the total size received by the machines in each composite machine in $\mathcal{CM}$ from small gifts must be at least $\frac{T}{6}$. Similarly the second set of constraints states that no small job is allocated to more than one machine. 
We first show that  the integer solution guaranteed by Theorem \ref{thm2} gives an integer assignment to {\em EAP} in  Figure \ref{fig:EAP}.
\begin{lemma}
 \label{lem:intsolution}
 {\em EAP} in Figure \ref{fig:EAP} and the vector program in Figure \ref{fig:SDPEAP} both have an integer solution.
\end{lemma}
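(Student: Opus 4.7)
The plan is to leverage Theorem \ref{thm2} directly: that theorem already exhibits a $\{0,1\}$ feasible solution $\mathbf{z}'$ to the ECLP, and the combinatorial data encoded in $\mathbf{z}'$ should translate in a straightforward way into an integer feasible point for EAP, and from there into an integer feasible point for the vector program in Figure \ref{fig:SDPEAP}.

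Unpacking the construction of $\mathbf{z}'$ from the proof of Theorem \ref{thm2}, each composite machine $D \in \mathcal{CM}$ is associated with a uniquely determined machine $i^{*}_D \in D$ and a uniquely determined configuration $C_D \in \mathcal{C}_{small}(i^{*}_D,\tfrac{T}{6})$ such that $z'_{i^{*}_D C_D} = 1$ and $z'_{iC} = 0$ for every other pair. Because $\mathbf{z}'$ satisfies the second set of ECLP constraints, the configurations $\{C_D\}_{D \in \mathcal{CM}}$ are pairwise disjoint subsets of $\mathcal{J}_{small}$, and because each $C_D$ lies in $\mathcal{C}_{small}(i^{*}_D,\tfrac{T}{6})$, we have $\sum_{j \in C_D} \mathcal{P}_{12}[i^{*}_D,j] \geq \tfrac{T}{6}$.

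For EAP, I would set $s_i := 1$ exactly when $i = i^{*}_D$ for some composite machine $D$ (and $s_i := 0$ otherwise), and set $u_{ij} := 1$ exactly when $s_i = 1$ and $j \in C_D$ for the $D$ containing $i$ (otherwise $u_{ij} := 0$). Verifying the five constraint blocks is then routine: equation \eqref{eqn:EAP2} holds because a single $s_i$ per composite machine is set to $1$; equation \eqref{eqn:EAP1} collapses to $\sum_{j \in C_D} \mathcal{P}_{12}[i^{*}_D,j]$ which is $\geq \tfrac{T}{6}$ by the preceding paragraph; equation \eqref{eqn:EAP3} follows from the pairwise disjointness of the $C_D$'s; and the box constraints are immediate since all coordinates lie in $\{0,1\}$.

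For the vector program in Figure \ref{fig:SDPEAP}, which I anticipate to be the standard vector lifting of EAP in which each scalar variable is replaced by an inner product of unit vectors in some auxiliary Hilbert space, the plan is to lift the above integer point coordinatewise by associating a fixed reference unit vector $\mathbf{e}_0$ to each variable that took value $1$ and the zero vector (or an orthogonal unit vector, depending on the precise formulation) to each variable that took value $0$. Inner products then reproduce the scalar values, so every inequality/equality of the vector program reduces to the corresponding EAP inequality already verified. The main obstacle I foresee is not the EAP part, which is essentially bookkeeping, but matching the bookkeeping to the exact normalization conventions of the vector program in Figure \ref{fig:SDPEAP}; provided the vector program is the natural SDP relaxation of EAP, the lift is mechanical and preserves feasibility.
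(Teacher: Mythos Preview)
Your proposal is correct and follows essentially the same route as the paper: both extract from the $\{0,1\}$ ECLP solution of Theorem~\ref{thm2} the distinguished machine $i^{*}_D$ and configuration $C_D$ for each composite machine, set $s_{i^{*}_D}=1$ and $u_{i^{*}_D j}=1$ for $j\in C_D$ (all else zero), and verify the EAP constraints via disjointness of the $C_D$'s and the size bound $\sum_{j\in C_D}\mathcal{P}_{12}[i^{*}_D,j]\ge T/6$. For the vector program the paper does exactly what you anticipate---it takes $p=\mathbf{1}$, $q_l=e_l$, and lifts each scalar $1$ to the fixed basis vector $e_1$ and each scalar $0$ to the zero vector---so your ``fixed reference unit vector'' plan matches the paper's construction once the normalization conventions of Figure~\ref{fig:SDPEAP} are plugged in.
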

\begin{proof}
 Let $\mathbf{z}$ be an integer solution to ECLP as guaranteed by Theorem \ref{thm2}.
 We define the integer values to the variables in $\langle \mathbf{u}, \mathbf{s} \rangle$ corresoponding to each composite machine as follows:
 For a composite machine $D$, let $i \in D$ be such that $z_{iC} = 1$ for a $C \in \mathcal{C}_{small}(i,\frac{T}{6})$.
 Define $s_{i} = 1$ and $s_{i'} = 0$ for all $i' \in D$ such that $i' \neq i$.
 For each $j \in C$ define $u_{ij} = 1$ and for all other $i' \in D$ and $j \in \mathcal{J}_{small}$, define $u_{i'j}=0$. 
 Since the value of the jobs in configuration $C$ is at least $\frac{T}{6}$, it follows that $\langle \mathbf{u} , \mathbf{s} \rangle$ is feasible for {\em EAP}  in Figure \ref{fig:EAP}. 
 
 We define the feasible solution for the vector program in Figure \ref{fig:SDPEAP} using the above solution defined for {\em EAP} as follows: all the vectors are $n$-dimensional vectors. $p$ is the vector in which each coordinate is 1, and $q_1, \ldots, q_n$ is the standard basis $e_1, \ldots, e_n\}$ of the $n$-dimensional Euclidean space.  If $s_i=1$, $a_i$ is $e_1$, a vector in the standard basis.   If $u_{ij}=1$, then $b_{ij} = e_1$.  All other vectors are zero vectors.  Since this assignment is defined from an integer solution for {\em EAP}, it is  a feasible solution 
 for the vector program in Figure \ref{fig:SDPEAP}.  Hence the lemma.
\end{proof}
\begin{lemma}
\label{lem:lemma0}
A fractional solution for \emph{EAP} can be found in polynomial time. 
\end{lemma}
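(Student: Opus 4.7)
The principal difficulty in Lemma \ref{lem:lemma0} is that \emph{EAP} is bilinear: constraint (\ref{eqn:EAP1}) involves the product $s_i u_{ij}$ of two decision variables, so standard LP solvers do not apply directly. The plan is to route through the vector (semidefinite) relaxation of \emph{EAP} given in Figure \ref{fig:SDPEAP}, which lifts each scalar variable to a vector and replaces the bilinear term $s_i u_{ij}$ by the inner product $\langle a_i, b_{ij}\rangle$, leaving an SDP with purely linear inner-product constraints.

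First, I would establish feasibility of this SDP. This is immediate from Lemma \ref{lem:intsolution}, which explicitly constructs an integer feasible point of the vector program from the integer \emph{ECLP} solution $\mathbf{z}$ guaranteed by Theorem \ref{thm2}. Once feasibility is known, I would solve the SDP in polynomial time: it has polynomially many vector variables (one per machine and one per machine--job pair), polynomially many inner-product constraints, and a bounded feasible region, so the ellipsoid method (or any interior-point SDP solver) produces a feasible vector solution to any prescribed additive accuracy in time polynomial in the input size.

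Next, I would project the vector solution back to scalars by setting $s_i := \langle a_i, p\rangle / \|p\|^2$ and $u_{ij} := \langle b_{ij}, q_j\rangle$, using the designated reference vectors $p$ and $q_1,\ldots,q_n$ already fixed in the program. The affine \emph{EAP} constraints (\ref{eqn:EAP2}), (\ref{eqn:EAP3}) and (\ref{eqn:EAP4}) are encoded as linear combinations of these inner products in the SDP and hence transfer verbatim under the projection. For the bilinear constraint (\ref{eqn:EAP1}), the inner product $\langle a_i, b_{ij}\rangle$ in the SDP plays the role of $s_i u_{ij}$; provided the vector program is augmented with norm-fixing constraints of the form $\|a_i\|^2 = \langle a_i, p\rangle$ and $\|b_{ij}\|^2 = \langle b_{ij}, q_j\rangle$ (both satisfied by the integer witness of Lemma \ref{lem:intsolution}), one can argue via Cauchy--Schwarz that the SDP lower bound of $\frac{T}{6}$ on $\sum_{i\in D}\sum_j \mathcal{P}_{12}[i,j]\langle a_i, b_{ij}\rangle$ carries over to the bilinear form $\sum_{i\in D} s_i\sum_j u_{ij}\mathcal{P}_{12}[i,j]$.

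The main obstacle is precisely this last step: calibrating the norm-fixing constraints in the vector program so that the scalar projection provably satisfies (\ref{eqn:EAP1}) in its exact bilinear form, rather than only a weakening of it. Everything else (feasibility, polynomial-time SDP solving, and projection) is routine. Once the calibration is in place, the scalar pair $\langle \mathbf{s},\mathbf{u}\rangle$ read off the SDP solution is a fractional feasible solution of \emph{EAP}, and the combined polynomial-time cost of the SDP solve and the projection yields Lemma \ref{lem:lemma0}.
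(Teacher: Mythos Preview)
Your high-level plan matches the paper's: establish feasibility of the vector program via Lemma \ref{lem:intsolution}, solve it as an SDP, and project back to scalars. However, both your projection formulas and your handling of the bilinear constraint are different from the paper's and do not work as stated.

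On the projection: the paper sets $s_i := p^{T} a_i$ and $u_{ij} := p^{T} b_{ij}$, both via $p$ and with no normalization. Your choice $s_i := \langle a_i, p\rangle / \|p\|^{2}$ would yield $\sum_{i\in D} s_i = 1/n$ rather than $1$, violating (\ref{eqn:EAP2}); and your $u_{ij} := \langle b_{ij}, q_j\rangle$ picks out only the $j$-th coordinate of $b_{ij}$, which does not recover the SDP constraint $\sum_{i} p^{T} b_{ij} \le 1$ as (\ref{eqn:EAP3}).

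On the bilinear constraint: your proposed norm-fixing constraints $\|a_i\|^{2} = \langle a_i, p\rangle$ and $\|b_{ij}\|^{2} = \langle b_{ij}, q_j\rangle$ are not satisfied by the integer witness of Lemma \ref{lem:intsolution} (there $b_{ij} = e_1$ whenever $u_{ij}=1$, so $\|b_{ij}\|^{2}=1$ while $\langle b_{ij}, q_j\rangle = \langle e_1, e_j\rangle$ vanishes for $j\neq 1$), so adding them would destroy feasibility. Even if they held, Cauchy--Schwarz with your definitions would only give $a_i^{T} b_{ij} \le \sqrt{n\, s_i u_{ij}}$, not $a_i^{T} b_{ij} \le s_i u_{ij}$. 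The paper avoids any extra calibration by exploiting the constraints already present in Figure \ref{fig:SDPEAP}: after rotating so that $q_l = e_l$ and $p = \mathbf{1}$, the inequalities $a_i^{T} q_l \ge 0$ and $b_{ij}^{T} q_l \ge 0$ force every coordinate of $a_i$ and $b_{ij}$ to be non-negative. Then $\|a_i\|_{2} \le \|a_i\|_{1} = \mathbf{1}^{T} a_i = s_i$ and likewise $\|b_{ij}\|_{2} \le u_{ij}$, so Cauchy--Schwarz gives $a_i^{T} b_{ij} \le \|a_i\|_{2}\,\|b_{ij}\|_{2} \le s_i u_{ij}$, and (\ref{eqn:EAP1}) follows directly from the first constraint of the vector program. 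This coordinatewise non-negativity, not a norm-fixing augmentation, is the missing idea.
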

\begin{proof}
We  show that a solution to the \emph{EAP} can be obtained from the  vector program in Figure \ref{fig:SDPEAP}. 
\begin{figure}[ht]
\centering
\begin{alignat}{3}
  & \text{minimize}   & \quad & 0          &&\nonumber \\
  & \text{subject to} &       & \sum\limits_{i \in D}\Big(\sum\limits_ {j \in \mathcal{J}_{small}}(a_i^T \cdot b_{ij}) \mathcal{P}_{12}[i,j]\Big)\ge \frac{T}{6} &\hspace{10pt}& \forall D \in \mathcal{CM}\\
  &                   &       & \sum\limits_{i \in D} {p}^T \cdot a_i = 1  & \hspace{10pt} & \forall D \in \mathcal{CM} \\
   &                   &       & \sum\limits_{i \in \mathcal{M}} {p}^T \cdot b_{ij} = 1  & \hspace{10pt} & \forall j \in \mathcal{J}_{small}  \\
  &                    &       & a_i^T \cdot q_l \geq 0,~b_{ij}^T \cdot q_l  \geq 0 &  &\forall ~l \in [n], \forall i \in \mathcal{M}, \forall j \in \mathcal{J}_{small} \\
  &                    &       & p^T \cdot p = n, ~p^T \cdot q_l = 1, ~q_l^T \cdot q_l = 1, ~q_l^T \cdot q_{l'} = 0 & \hspace{10pt}& \forall l \neq l' \in [n] \label{eqn:cEAP4}
\end{alignat}
\caption{A Vector Program for the Extended Assignment Program}
\label{fig:SDPEAP}
\end{figure} 
In the vector program each variable is a $n$-dimensional column vector. Recall, $n$ is the number of jobs.  From Equation \ref{eqn:cEAP4} it follows that in a solution to the vector program, the vectors $q_1 \ldots q_n$ forms an orthonormal basis of unit vectors of the $n$-dimensional Euclidean space.  In other words $q_1 \ldots q_n$
can be considered to define the axes of a coordinate system.  
Using standard techniques from linear algebra, in polynomial we can transform this coordinate system to a coordinate system in which the axes are defined by the standard basis vectors $e_1, \ldots, e_n$.  
 Clearly, all  pairwise relationships between two vectors in the original solution will be preserved in the transformed solution. Thus, $p$ will be transformed to the vector in which all coordinates are 1 (denoted by $\mathbf{1}$). Further, all the inequalities will be satisfied by the transformed solution.  Consequently, in the analysis below we consider a feasible solution in which $p$ is the vector $\mathbf{1}$ and $q_1, \ldots, q_n$ are the vectors $e_1, \ldots, e_n$ of the standard basis, respectively.  \\
From Lemma \ref{lem:intsolution} we know that the vector program as a feasible solution.  Let $\langle \mathbf{a}, \mathbf{b} \rangle$ denote a feasible solution for the vector program.
We show that from $\langle \mathbf{a}, \mathbf{b} \rangle$ we can get a feasible solution $\langle \mathbf{u}, \mathbf{s} \rangle$ for {\em EAP}.  The feasible solution is defined as follows for each composite machine $D \in \mathcal{CM}, i \in D, j \in \mathcal{J}_{small}$:
$$s_i = \mathbf{1}^T \cdot a_i, ~~ u_{ij} = \mathbf{1}^T \cdot b_{ij}$$
Let  $\theta_{ij}$ \text{ denote the angle between } $a_i$ \text{ and } $b_{ij}$.
To see that $\langle \mathbf{u}, \mathbf{s} \rangle$ is feasible for {\em EAP}, we first observe that $0 \leq s_i, u_{ij} \leq 1$. Secondly, we know that for a vector $v$, $\norm{v}_2 \leq \norm{v}_1$. In other words, the Euclidean norm of $v$ is at most the rectilinear norm of $v$.  Further, all the coordinates of the vectors $a_i$ and $b_{ij}$ are non-negative.  Thus, $\norm{a_i}_2 \leq \norm{a_i}_1 = \mathbf{1}^T \cdot a_i$ and $\norm{b_{ij}}_2 \leq \norm{b_{ij}}_1 = \mathbf{1}^T \cdot b_{ij}$. Therefore, it follows that 
$a_i^T \cdot b_{ij} = \norm{a_i}\cdot\norm{b_{ij}}\cdot cos(\theta_{ij}) \leq (\mathbf{1}^T \cdot a_i) \cdot (\mathbf{1}^T \cdot b_{ij}) = s_i \cdot u_{ij}$.  This ensures that Equation \ref{eqn:EAP1} is satisfied.  
Since $s_i$ is defined to be $\mathbf{1}^T \cdot a_i$, clearly Equation \ref{eqn:EAP2} is satisfied.
Further, $u_{ij} = \mathbf{1}^T \cdot b_{ij}$, and this ensures that Equation \ref{eqn:EAP3} is satisfied.\\ Finally,  it is well-known that the vector program in Figure \ref{fig:SDPEAP} matches the definition of a standard semi-definite program and a solution to the vector program can be found  standard semi-definite programming techniques.  Hence the lemma.
\end{proof}
\noindent
Next we show the following lemma which shows that from the feasible solution $\langle \mathbf{u},\mathbf{s}\rangle$ to the \emph{EAP} we can {\em extract} an Assignment LP instance which is feasible for the value of $\frac{T}{6}$.
\begin{lemma}
\label{roundinglemma}
Corresponding to the feasible solution $\langle \mathbf{u},\mathbf{s}\rangle$ of the EAP described above, there is a polynomial time computable feasible solution $\langle \mathbf{u}',\mathbf{s}'\rangle$ for EAP such that $\mathbf{s}' \in \{0,1\}^{|\mathcal{M}|}$.
\end{lemma}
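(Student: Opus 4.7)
The plan is to round $\mathbf{s}$ to a $\{0,1\}$ vector composite-machine by composite-machine via a simple averaging argument, and then to restrict $\mathbf{u}$ accordingly. The key observation that makes the rounding clean is that the composite machines in $\mathcal{CM}$ form a partition of $\mathcal{M}$, so the choice we make inside one $D \in \mathcal{CM}$ does not interact with that made inside another $D' \in \mathcal{CM}$; in particular, setting some $u_{ij}$ values to zero during rounding can only help the per-job capacity constraint \eqref{eqn:EAP3}.

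For every composite machine $D \in \mathcal{CM}$, abbreviate $V_i := \sum_{j \in \mathcal{J}_{small}} u_{ij}\, \mathcal{P}_{12}[i,j]$ for $i \in D$. Constraint \eqref{eqn:EAP1} reads $\sum_{i \in D} s_i V_i \geq \tfrac{T}{6}$, while \eqref{eqn:EAP2} together with $s_i \geq 0$ says that $\{s_i\}_{i \in D}$ is a probability distribution on $D$. Hence $\sum_{i \in D} s_i V_i$ is a convex combination of the non-negative quantities $V_i$, so there exists at least one machine $i^{*}(D) \in D$ with $V_{i^{*}(D)} \geq \tfrac{T}{6}$. Such an $i^{*}(D)$ is identified in polynomial time simply by comparing the $V_i$'s for $i \in D$.

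I would then construct the promised solution $\langle \mathbf{u}', \mathbf{s}' \rangle$ as follows. For every $D \in \mathcal{CM}$, set $s'_{i^{*}(D)} = 1$ and $s'_i = 0$ for all other $i \in D$. For each $i \in \mathcal{M}$ and $j \in \mathcal{J}_{small}$, set $u'_{ij} = u_{ij}$ if $i = i^{*}(D)$ for some $D$, and $u'_{ij} = 0$ otherwise. Then \eqref{eqn:EAP2} is immediate. For \eqref{eqn:EAP1}, only the summand $i = i^{*}(D)$ contributes on the left hand side and equals $V_{i^{*}(D)} \geq \tfrac{T}{6}$. For \eqref{eqn:EAP3}, because $u'_{ij} \leq u_{ij}$ coordinate-wise, $\sum_{i \in \mathcal{M}} u'_{ij} \leq \sum_{i \in \mathcal{M}} u_{ij} \leq 1$ for every $j \in \mathcal{J}_{small}$. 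The box constraints on $\mathbf{u}'$ and $\mathbf{s}'$ hold by construction, and $\mathbf{s}' \in \{0,1\}^{|\mathcal{M}|}$ as required.

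I do not anticipate a serious obstacle here; the decisive point is that \eqref{eqn:EAP1}, read per composite machine, has the structure of an expectation that is at least $\tfrac{T}{6}$, which forces the existence of a deterministic maximiser attaining the same bound. The harder technical work of this section is already carried by Lemmas \ref{lem:intsolution} and \ref{lem:lemma0}, which establish that EAP is feasible in the first place; the present step only converts that feasibility into one in which exactly one machine per composite machine is selected integrally.
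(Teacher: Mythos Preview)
Your proposal is correct and follows essentially the same approach as the paper: both select, for each composite machine $D$, a single machine $i_D$ via the averaging argument (the weighted sum $\sum_{i\in D} s_i V_i \ge \tfrac{T}{6}$ forces some $V_{i_D}\ge \tfrac{T}{6}$), set $s'_{i_D}=1$ and zero out $u'_{ij}$ for all other $i\in D$, and then verify the remaining constraints using $u'_{ij}\le u_{ij}$. Your write-up is in fact slightly more explicit about why the averaging step is valid.
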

\begin{proof}
The first and second constraints of \emph{EAP} together with a simple averaging argument implies that, corresponding to each $D \in \mathcal{CM}$, there exists machine $i_D$ such that $\sum\limits_{j \in \mathcal{J}}u_{i_Dj}\mathcal{P}_{12}[i_D,j]\ge \frac{T}{6}$. We form the pair of vectors $\langle \mathbf{u}', \mathbf{s}'\rangle $ as follows.
\begin{itemize}
\item We set $s'_{i_D}=1$ and $u'_{i_Dj}= u_{i_Dj}$ for each $j \in \mathcal{J}$.  
\item We set $s'_{i}=0$ and $u'_{ij}=0$ for each $i \in D$ such that $i \neq i_D$ and $j \in \mathcal{J}$.
\end{itemize}
Now we note the following.
\begin{itemize}
\item $\mathbf{s}' \in \{0,1\}^{|\mathcal{M}|}$ and hence satisfies the set of constraints (7) of \emph{EAP}.
\item For each $i \in \mathcal{M}$ and $j \in \mathcal{J}$, we see that $0\le u'_{ij}\le u_{ij}\le 1$. Hence $\mathbf{u}$ satisfies the set of constraints (6) of \emph{EAP}.
\item For each $D \in \mathcal{CM}$, we see that there exists exactly one machine $i_D$ such that $s'_{i_D}=1$ and for each machine $i \in D$ such that $i \neq i_D$, we have $s'_i=0$. Hence $\sum\limits_{i \in D}s'_i=1$. Hence $\mathbf{s}'$ satisfies the set of constraints (4) of \emph{EAP}.
\item For each $j \in \mathcal{J}$, we see that 
\begin{eqnarray*}
\sum\limits_{i \in \mathcal{M}}u'_{ij} =  \sum\limits_{D \in \mathcal{CM}}\sum\limits_{i \in D} u'_{ij} 
\leq  \sum\limits_{D \in \mathcal{CM}} \sum\limits_{i \in D} u_{ij} 
 \le 1
\end{eqnarray*}
Hence the pair $\langle \mathbf{u},\mathbf{s} \rangle $ satisfies the set of constraints (5) of \emph{EAP}. This proof follows from Lemma \ref{lem:eapfeasible}.
\item For each $D \in \mathcal{CM}$, by the choice of $i_D$ we see that 
\begin{eqnarray*}
\sum\limits_{i \in D}s'_i\sum\limits_{j \in \mathcal{J}}u'_{ij}\mathcal{P}_{12}[i,j] & = &\sum\limits_{j \in \mathcal{J}}u'_{i_Dj}\mathcal{P}_{12}[i_D,j] = \sum\limits_{j \in \mathcal{J}}u_{i_Dj}\mathcal{P}_{12}[i_D,j] \geq \frac{T}{6}
\end{eqnarray*}
\end{itemize}
The polynomial time computability of $\langle \mathbf{u}',\mathbf{s}'\rangle $ easily follows from the construction. Hence the lemma is proved.
\end{proof}

\subsubsection{Finding the $\frac{T}{12}$ solution for clustered instance}
\label{findingTby12solutionforcase2}
We now show that $I$ has a polynomial time computable solution that allocates jobs among the machines in $I$ in such a way that each machine gets a total size of at least $\frac{T}{12}$.
\begin{lemma}
\label{finallemma}
If the given Santa Claus instance $I$ has one or more upper class machines with respect to the solution $\mathbf{x}$ of $CLP(I_{12},T)$, then a solution which allocates job configurations of size at least $\frac{T}{12}$ to each machine in $I$ can be found in polynomial time.
\end{lemma}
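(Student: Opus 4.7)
The plan is to combine the results from the preceding subsections in a straightforward pipeline: solve the EAP, extract an integer choice of one machine per composite machine, convert what remains into an Assignment LP and round it, then take care of the leftover upper class machines in each super machine using the big jobs made available by the clustering lemma.

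First I would invoke Lemma \ref{lem:lemma0} to compute a fractional feasible solution $\langle \mathbf{u}, \mathbf{s} \rangle$ of the EAP in polynomial time, and then apply Lemma \ref{roundinglemma} to obtain in polynomial time a feasible solution $\langle \mathbf{u}', \mathbf{s}' \rangle$ with $\mathbf{s}' \in \{0,1\}^{|\mathcal{M}|}$. By construction, $\mathbf{s}'$ selects exactly one machine $i_D$ from each composite machine $D \in \mathcal{CM}$; let $L = \{i_D \mid D \in \mathcal{CM}\}$. Note that $L$ contains every middle class machine (since each such machine forms a singleton composite machine) together with exactly one upper class machine from each super machine.

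Next I would use $\mathbf{u}'$ to read off an Assignment LP solution for the subinstance $I_L = \langle L, \mathcal{J}_{small}, \mathcal{P}_{12}\mid_{L}\rangle$. The EAP constraints together with the choice of $i_D$ ensure that $\sum_{j \in \mathcal{J}_{small}} u'_{i_D j}\, \mathcal{P}_{12}[i_D,j] \geq \tfrac{T}{6}$ for every $D$, and that $\sum_{i \in L} u'_{ij} \leq 1$ for every small job $j$. Thus $(u'_{ij})_{i \in L, j \in \mathcal{J}_{small}}$ is a feasible solution to $ALP(I_L)$ with objective value at least $\tfrac{T}{6}$. Since every job involved is a small job, its size on any machine is strictly less than $\tfrac{T}{12}$. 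Applying Lemma \ref{bezakovadani} to this fractional assignment therefore yields, in polynomial time, an integer $\{0,1\}$ allocation of small jobs to the machines in $L$ in which every machine in $L$ receives total size at least $\tfrac{T}{6} - \tfrac{T}{12} = \tfrac{T}{12}$.

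It remains to handle the upper class machines that were not selected into $L$. For each super machine $M_k \in \mathcal{SM}$ exactly one machine $i_{M_k} \in M_k$ lies in $L$, leaving $|M_k|-1$ upper class machines unassigned. By property (2) of Lemma \ref{clusteringlemma}, the $|J_k| = |M_k|-1$ big jobs of the corresponding super job $J_k$ can be placed feasibly on any $|M_k|-1$ machines of $M_k$, so in particular we can assign one distinct big job to each of the $|M_k|-1$ remaining machines of $M_k$; each of them then receives size $T \geq \tfrac{T}{12}$ (since in $I$ a big job has size at least $\tfrac{T}{12}$ on the machines it is assigned to, by definition of the $12$-gap instance). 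Combining the small-job allocation on $L$ with this big-job allocation on the other upper class machines gives a valid allocation of jobs in $I$ in which every machine receives total size at least $\tfrac{T}{12}$, and the whole construction is polynomial time.

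The main obstacle in this chain is not any single calculation but the step just carried out in Lemma \ref{roundinglemma}: it is what forces the fractional EAP solution to collapse to a single machine per composite machine while preserving the $\tfrac{T}{6}$ guarantee, and without it one could not reduce the problem to a plain Assignment LP on the list $L$ and apply Bezakova--Dani. Once that reduction is in hand, the loss of $\tfrac{T}{12}$ from rounding is exactly absorbed by the gap between $\tfrac{T}{6}$ and $\tfrac{T}{12}$, and the big jobs take care of themselves via the clustering lemma.
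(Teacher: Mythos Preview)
Your proposal is correct and follows essentially the same approach as the paper: obtain an EAP solution via Lemma~\ref{lem:lemma0}, integralize $\mathbf{s}$ via Lemma~\ref{roundinglemma} to select one machine $i_D$ per composite machine, feed $\mathbf{u}'$ into $ALP$ on those machines and round with Lemma~\ref{bezakovadani} (losing at most $T/12$), and finally cover the remaining upper class machines in each super machine with the big jobs supplied by Lemma~\ref{clusteringlemma}. Your write-up is in fact slightly more explicit than the paper's in justifying why the big-job step goes through in the original instance $I$.
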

\begin{proof}
Let the given Santa Claus instance $I$ has one or more upper class machines with respect to the parameter $T$. We now form the composite machines and solve the corresponding \emph{ECLP} to obtain a feasible solution $\mathbf{z}$. Then we compute the pair $\langle \mathbf{u},\mathbf{s}\rangle$ representing the solution of the corresponding \emph{EAP} using the approach  given in Section \ref{computingTby6solution}. Then obtain the \emph{EAP} solution $\langle \mathbf{u}',\mathbf{s}'\rangle $ as described in Lemma \ref{roundinglemma}. Now consider the instance $I'$ consisting of machines in the set $\Big \{i_D \Big | D \in \mathcal{CM} \Big\}$ and the small jobs in $\mathcal{J}_{small}$. Clearly the solution $\mathbf{v}=\Big[u'_{i_Dj} \Big | j \in \mathcal{J}_{small} \Big]$ is a feasible solution for $ALP(I')$ with objective function at least $\frac{T}{6}$. Now applying Lemma \ref{bezakovadani} on $\mathbf{v}$, we get a $\{0,1\}$ solution $\mathbf{v}'$ that ensures that each machine gets small job configurations with size of at least $\frac{T}{6}-\frac{T}{12}=\frac{T}{12}$. Now we form the solution for $I$ as follows.
\begin{itemize}
\item For each composite machine $D \in \mathcal{CM}$,
\begin{itemize}
\item If $D$ corresponds to a middle class machine $i$, then clearly $i_D=i$ and we allocate to $i$, all the small jobs $j$ such that $v'_{i_Dj}=1$. 
\item If $D$ corresponds to a super machine $M$, then we choose the machine $i_D$ from $M$ and allocate to $i_D$, all the small jobs $j$ such that $v'_{i_Dj}=1$. For each machine $i \in M$ such that $i\neq i_D$, assign a single big gift from the corresponding super job $J$.
\end{itemize}
\end{itemize}
By the way of construction of this solution, we see that corresponding to each composite machine, there is exactly one machine which is allocated a small job configuration of size at least $\frac{T}{12}$ and each of the remaining machines is assigned a single big job which has size at least $\frac{T}{12}$ in $I$. Hence we see that the solution ensures that each machine in $I$ gets job configurations of size at least $\frac{T}{12}$. 

The polynomial time computability of the solution follows from the facts that 
\begin{itemize}
\item \emph{ECLP} can be solved in polynomial time.
\item The vectors $\langle \mathbf{u},\mathbf{s}\rangle $ and $\langle \mathbf{u}',\mathbf{s}'\rangle $ can be found in polynomial time as evident from Lemma \ref{lem:lemma0}and Lemma \ref{roundinglemma}.
\item The integer solution $\mathbf{v}'$ can be obtained from the fractional solution $\mathbf{v}$ in polynomial time as evident from Lemma \ref{bezakovadani}.
\end{itemize}
Hence the proof.
\end{proof}
\subsection{Proving Theorem \ref{thm1}}
We now restate Theorem 1 and give a proof of the same.
\begin{theoremdup}
There exists a polynomial time 12-approximation algorithm for the restricted Santa Claus problem.
\end{theoremdup}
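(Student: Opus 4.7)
The plan is to assemble the machinery already developed into a single polynomial time procedure and then account for the approximation ratio against the configuration LP optimum. First I would run the binary search described in Step 1 of the algorithm outline to find the largest $T$ such that $CLP(I,T)$ is feasible; by Lemma \ref{configurationLPtoAssignmentLP} this $T$ upper bounds the optimum of any integral solution (up to the integrality gap), and crucially it bounds the true optimum of the restricted Santa Claus instance $I$ from above, since an optimal integral allocation of value $T^*$ induces a feasible solution of $CLP(I,T^*)$. So it suffices to exhibit a polynomial time algorithm that produces an integral allocation in which every machine receives size at least $T/12$.

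Next I would form the $12$-gap instance $I_{12}$ as in Definition \ref{alphagapinstance} and solve $CLP(I_{12},T)$; by Lemma \ref{bigsmallgiftlemma} this LP is feasible, and by the ellipsoid-plus-knapsack argument recalled in Section \ref{ALPCLP} a feasible solution $\mathbf{x}$ can be found in polynomial time. Using $\mathbf{x}$ I would partition $\mathcal{M}$ into $\mathcal{M}_{upper}$ and $\mathcal{M}_{middle}$ by Definition \ref{upperclassmiddleclass}. The algorithm now branches on whether $\mathcal{M}_{upper}$ is empty. If it is, Lemma \ref{noupperclasslemma} delivers, in polynomial time, an allocation of small jobs in which every machine gets size at least $\frac{5T}{12}\ge \frac{T}{12}$.

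Otherwise $\mathcal{M}_{upper}\neq \emptyset$, and I would invoke Lemma \ref{finallemma} directly: it produces, in polynomial time, a solution for $I$ in which every machine in $I$ gets total size at least $\frac{T}{12}$. The construction there assigns, for each composite machine $D\in\mathcal{CM}$, the selected representative $i_D$ the small job bundle guaranteed by the rounding of the \emph{EAP} solution $\langle \mathbf{u}',\mathbf{s}'\rangle$ (yielding size at least $\frac{T}{12}$ after applying Lemma \ref{bezakovadani}), and the remaining upper class machines in each super machine receive a single big job of size $T\ge\frac{T}{12}$ via the flexibility granted by Lemma \ref{clusteringlemma}.

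Putting the two cases together, the output is always an allocation of value at least $\frac{T}{12}$; since $T$ upper bounds the optimum, this is a $12$-approximation. The only remaining thing to check is that every step in the pipeline runs in polynomial time, which I would verify by listing: binary search over a polynomially bounded range of rational $T$'s, solving $CLP(I_{12},T)$, the machine classification, the forest construction and clustering of Lemmas \ref{cyclekillinglemma} and \ref{clusteringlemma}, solving \emph{ECLP} and extracting the \emph{EAP} solution via the vector program of Lemma \ref{lem:lemma0}, the rounding in Lemma \ref{roundinglemma}, and finally the Bezakova--Dani rounding of Lemma \ref{bezakovadani}. No step in this chain should be an obstacle since each has been established as polynomial time earlier in the paper; the only minor care point is verifying that the binary search terminates in polynomially many rounds (standard, since the relevant $T$ values come from a polynomially bounded discrete set of thresholds induced by the job sizes).
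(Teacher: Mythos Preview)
Your proposal is correct and follows essentially the same route as the paper: branch on whether $\mathcal{M}_{upper}$ is empty and invoke Lemma~\ref{noupperclasslemma} or Lemma~\ref{finallemma}, then observe that $T$ upper bounds the optimum so a value-$\tfrac{T}{12}$ allocation is a $12$-approximation. Two minor slips that do not affect correctness: Lemma~\ref{bigsmallgiftlemma} is stated in the reverse direction (you need that sizes only increase in $I_{12}$, so feasibility of $CLP(I,T)$ implies feasibility of $CLP(I_{12},T)$), and the big jobs assigned to the non-selected machines in each super machine have size at least $\tfrac{T}{12}$ in the original instance $I$, not $T$.
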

\begin{proof}
The algorithm follows that outline given in Section \ref{sec:algooutline}.
If the given Santa Claus instance $I$ has no upper class machines with respect to solution $\mathbf{x}$ of $CLP(I_{12},T)$, then by Lemma \ref{noupperclasslemma}, we see that there exists a polynomial time computable solution with minimum total size of at least $\frac{T}{12}$. On the other hand, if $I$ has one or more upper class machines with respect to $\mathbf{x}$, then by Lemma \ref{finallemma}, there exists a polynomial time computable solution with minimum total size $\frac{T}{12}$. Hence in either case, we can compute a solution for $I$ with minimum total size $\frac{T}{12}$ in polynomial time. Hence the proof.
\end{proof}

\section{Conclusion}
We have given a linear programming based polynomial time algorithm for the restricted Santa Claus problem with approximation ratio of 12.
We believe that our technique along with the technique of Chidambaram \textit{et.al} \cite{annamalai1} can be used improve the approximation ratio further.  

\bibliographystyle{unsrt}	
\nocite{*}	
\bibliography{boundedpathwidthtreewidth.bib}
\appendix
\end{document}